\documentclass[a4paper, thm-restate,cleveref]{lipics-v2021}

\usepackage{graphicx} 
\usepackage{amssymb,amsmath,amsthm} 
\usepackage{todonotes} 
\usepackage{mathtools}
\usepackage{enumitem}
\usepackage{soul}
\usepackage{thmtools}
\usepackage{comment}
\usepackage{hyperref}
\usepackage{float}

\def\nth#1{#1^{\mbox{\tiny th}}}
\def\N{\mathbb{N}}
\def\ta{\mathtt{a}}
\def\tb{\mathtt{b}}

\def\tA{\mathtt{A}}
\def\tT{\mathtt{T}}
\def\tU{\mathtt{U}}
\def\tG{\mathtt{G}}
\def\tC{\mathtt{C}}

\newcounter{algno}

\DeclareMathOperator{\Fact}{Fact}

\DeclareMathOperator{\Pref}{Pref}

\DeclareMathOperator{\Suff}{Suff}

\newtheorem{problem}{Problem}
\newtheorem{question}[theorem]{Question}


\newcommand{\hpd}{%
\begin{tikzpicture}[scale=0.3]%
\draw (0,0) -- (1,0);%
\draw (0,0.2) -- (1,0.2);%
\draw (0.3,0) -- (0.3,0.2);%
\draw (0.6,0) -- (0.6,0.2);%
\draw (0.9,0) -- (0.9,0.2);%
\draw (1,0) arc (-150:150:0.2);%
\end{tikzpicture}
}





\newcommand{\hpdb}{\hpd}
\newcommand{\hpdbe}{\mathrel{\hpd}}


\newcommand{\hpdbpe}{\mathrel{\overset{\mathtt{p}}{\hpd}}}











\newcommand{\hpdbloge}{\mathrel{\overset{\mathtt{log}}{\hpd}}}

\newcommand{\hpdblogp}{\overset{\mathtt{p'log}}{\hpd}}
\newcommand{\hpdblogpe}{\mathrel{\overset{\mathtt{p'log}}{\hpd}}}
\newcommand{\hpdblogpmax}{\overset{\mathtt{p\uparrow}\text{-}\mathtt{log}}{\hpd}}
\newcommand{\hpdblogpmaxe}{\mathrel{\overset{\mathtt{p\uparrow}\text{-}\mathtt{log}}{\hpd}}}

\title{Programmable Co‑Transcriptional Splicing: Realizing Regular Languages via Hairpin Deletion}
\titlerunning{Realizing Regular Languages via Hairpin Deletion}

\author{Da-Jung Cho}{Department of Software and Computer Engineering, Ajou University, Republic of Korea}{dajungcho@ajou.ac.kr}{}{}
\author{Szil\'ard Zsolt Fazekas}{Graduate School of Engineering Science, Akita University, Japan}{szilard.fazekas@ie.akita-u.ac.jp}{}{}
\author{Shinnosuke Seki}{University of Electro-Communications, Tokyo, Japan}{s.seki@uec.ac.jp}{}{} 
\author{Max Wiedenh\"oft}{Department of Computer Science, Kiel University, Germany}{maw@informatik.uni-kiel.de}{}{}

\authorrunning{D.\,-J. Cho, S.\,Z. Fazekas, S. Seki, and M. Wiedenh\"oft}
\Copyright{Da-Jung Cho, Szil\'ard Z. Fazekas, Shinnosuke Seki, and Max Wiedenh\"oft}

\ccsdesc[300]{Applied computing~Bioinformatics}
\keywords{RNA Transcription, Co-Transcriptional Splicing, Finite Automata Simulation, NFA Minimization}
\date{}

\funding{\emph{Da-Jung Cho} was supported by Institute of Information \& communications Technology Planning \& Evaluation~(IITP) under the Artificial Intelligence Convergence Innovation Human Resources Development~(IITP-2025-RS-2023-00255968) grant funded by the Korea government~(MSIT). \emph{Szil\'ard Zsolt Fazekas} was supported by JSPS Kakenhi Grant No. 23K10976. \emph{Max Wiedenhöft's} work was partially supported by the DFG project number 437493335.}

\hideLIPIcs
\nolinenumbers

\EventEditors{Josie Schaeffer and Fei Zhang}
\EventNoEds{2}
\EventLongTitle{31st International Conference on DNA Computing and Molecular Programming (DNA 31)}
\EventShortTitle{DNA 31}
\EventAcronym{DNA}
\EventYear{31}
\EventDate{August 25--29, 2025}
\EventLocation{Lyon, France}
\EventLogo{}
\SeriesVolume{347}
\ArticleNo{5}

\begin{document}

\maketitle
\begin{abstract}
RNA co-transcriptionality, where RNA is spliced or folded during transcription from DNA templates, offers promising potential for molecular programming. It enables programmable folding of nano-scale RNA structures and has recently been shown to be Turing universal. While post-transcriptional splicing is well studied, co-transcriptional splicing is gaining attention for its efficiency, though its unpredictability still remains a challenge.
In this paper, we focus on engineering co-transcriptional splicing, not only as a natural phenomenon but as a programmable mechanism for generating specific RNA target sequences from DNA templates. The problem we address is whether we can encode a set of RNA sequences for a given system onto a DNA template word, ensuring that all the sequences are generated through co-transcriptional splicing. Given that finding the optimal encoding has been shown to be NP-complete under the
various energy models considered~\cite{ChoFSM25}, we propose a practical alternative approach under the logarithmic
energy model. 
More specifically, we provide a construction that encodes an arbitrary nondeterministic finite automaton (NFA) into a circular DNA template from which co-transcriptional splicing produces all sequences accepted by the NFA. As all finite languages can be efficiently encoded as NFA, this framework solves the problem of finding small DNA templates for arbitrary target sets of RNA sequences. The quest to obtain the \emph{smallest} possible such templates naturally leads us to consider the problem of minimizing NFA and certain practically motivated variants of it, but as we show, those minimization problems are computationally intractable.
\end{abstract}

\newpage
\section{Introduction}

RNA splicing is a fundamental process in eukaryotic gene expression, where intronic sequences are removed from precursor messenger RNA (pre-mRNA) transcripts, and the remaining exonic sequences are ligated together to form a mature messenger RNA~(mRNA).
This essential process is mediated by the spliceosome, a dynamic and large macromolecular complex consisting of small nuclear RNAs~(snRNAs) and associated proteins.
The spliceosome assembles on the pre-mRNA in a highly conserved and sequential manner. The assembly involves the sequential recruitment of specific snRNPs~(small nuclear ribonucleoproteins) and various factors that play critical roles in splice site recognition and catalysis. 
The process begins when the 5'-splice site~(5'-SS) is identified by the polymerase-spliceosome complex. This site is then kept in close proximity to the transcribed region of the RNA, awaiting the hybridization event with the complementary sequence at the 3'-splice site~(3'-SS) after subsequent transcription of the RNA. Once the 3'-SS is transcribed, the intronic sequence is excised, and the two exons are ligated together to form the mature RNA. Splicing takes place while transcription is still ongoing, meaning that the spliceosome assembles and acts on the RNA as it is being made. This close coordination between transcription and splicing is known as \emph{co-transcriptional splicing}.  Increasing biochemical and genomic evidence has established that spliceosome assembly and even catalytic steps of splicing frequently occur co-transcriptionally~\cite{MerkhoferHJ14}. As shown in Figure~\ref{fig:CTS}, co-transcriptional splicing involves the spliceosome's early recognition and action on splice sites while the RNA is still being transcribed. This coupling imposes temporal and structural constraints on splice site selection and nascent RNA folding, since spliceosome assembly and splicing catalysis occur during ongoing transcription.

\begin{figure}[h!]
    \centering
    \includegraphics[scale=0.32]{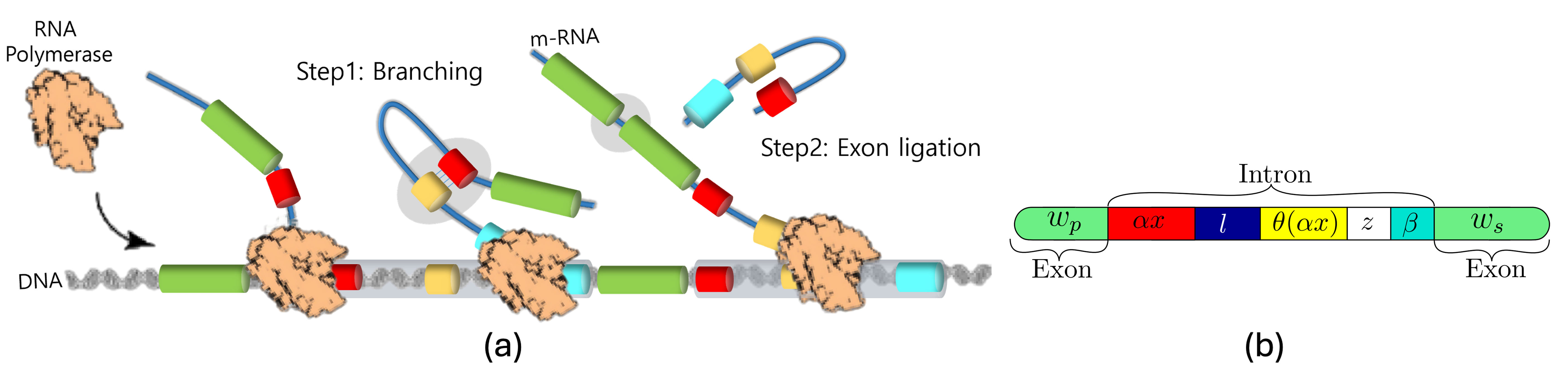}
    \caption{(a) An illustration of co-transcriptional splicing: The process begins when the 5'-splice site~(5'-SS) is identified by the polymerase-spliceosome complex. This site is then kept in close proximity to the transcribed region of the RNA, awaiting the hybridization event with the complementary sequence at the 3'-splice site~(3'-SS) after subsequent transcription of the RNA. The transcript forms a hairpin where the 5'-SS binds the branch point. After the 3'-SS is transcribed the spliceosome removes the intron and joins the exons. (b) A string representation of the DNA sequence in~(a), factorized according to the scheme of co-transcriptional splicing, as used in Definition~\ref{def:operation}.}
    \label{fig:CTS}
\end{figure}

While co-transcriptional splicing has been studied as a natural phenomenon, it is increasingly being recognized as a programmable and engineerable process. 
Indeed, Geary, Rothemund, and Andersen have proved that one of such processes called co-transcriptional folding is programmable to assemble nano-scale single-stranded RNA structures in vitro~\cite{GearyRA14} and then Seki, one of the authors, proved in collaboration with Geary and others that it is Turing universal by using an oritatami model~\cite{GearyMSS19}. 
Recently, we became curious about the potential of using RNA sequences as programmable components in molecular systems. Motivated by advances in co-transcriptional RNA folding and splicing, this raises the question: Can we encode a set of RNA sequences for a given system onto a single DNA template such that all the sequences are generated through co-transcriptional splicing, while avoiding sequences that may disrupt the system? 
As transcription proceeds and the 3'-SS is transcribed, RNA polymerase often pauses just downstream, allowing time for accurate splice site recognition and splicing to take place. The folding of the emerging RNA strand can influence this process. 
Motivated by this perspective, we introduced a formal model of co-transcriptional splicing~\cite{ChoFSM25}, defined as \emph{contextual lariat deletion} under various biologically plausible conditions (energy models). For the sake of simplicity and to avoid confusion with the rather different but similarly named \emph{contextual deletion}~\cite{KariT96}, we will refer to the operation here as \emph{hairpin deletion}. 
We showed that the template construction problem is NP-complete when $|\Sigma| \geq 4$.
The hardness of this problem led us to consider an alternative approach for the template construction problem. A natural observation is that the DNA template word can be obtained by finding the shortest common supersequence of the RNA target sequences. However, this problem is also known to be NP-complete~\cite{GareyJ02, Maier78}. 
The contributions of this paper are as follows:
\begin{itemize}
\item \textbf{(Parallel) Logarithmically-bounded hairpin deletion operation}: In natural RNA folding, the destabilizing energy contributed by loop regions in hairpin structures increases logarithmically with loop size. While small loops incur significant energy penalties, this effect diminishes as loop length increases~\cite{EinertN11}. This behavior supports the logarithmic-bounded hairpin model, where the destabilization caused by a loop of length $\ell$ is proportional to $\log(\ell)$. Our model exploits this property by favoring hairpin structures with long loops and short stems, which remain energetically viable under the logarithmic cost function.
Furthermore, we introduce a parallel deletion model to represent this consecutive splicing behavior since co-transcriptional splicing tends to proceed consecutively rather than recursively due to kinetic and structural constraints during transcription.
\item \textbf{Template word construction by encoding RNA target sequences to a finite automata on a circular DNA template word}: In Section~\ref{sec_sim}, our main result shows, if we consider a set of target RNA sequences as a regular language represented by some (non)deterministic finite automaton, we can construct a DNA template from which we can obtain exactly that language using the logarithmic hairpin deletion model. 
The construction operates on a circular DNA template, which is widely used in molecular biology, for instance in plasmids and viral genomes~\cite{DelGRED98, GearyRA14}, and supports continuous transcription.
We first demonstrate how to construct such a circular DNA word that encodes an arbitrary NFA according with the behavior of logarithmic hairpin deletion.
Then, a detailed construction for the RNA alphabet~$\Sigma_{\mathtt{RNA}} = \{\tA,\tU,\tC,\tG\}$ is also provided, illustrating how the states and transitions of the automaton are encoded. It is shown that this construction actually works for an appropriately designed DNA template word under the logarithmic hairpin deletion model.
\item \textbf{Minimizing NFA-like models}: As the size of the constructed template depends on the number of states and transitions in the automaton, minimizing the NFA becomes an important consideration. As the minimization problem for NFAs is NP-hard in general~\cite{JiangR93}, in Section~\ref{sec_dec}, we consider restricted NFA-like models that fit our setting. We show that minimization in these restricted cases is still computationally hard.
This highlights a fundamental complexity barrier in optimizing DNA template designs for generating arbitrary regular languages of RNA sequences via co-transcriptional splicing. Therefore, it may be more effective to focus on designing specific classes of target sequences and their corresponding automata representations.
\end{itemize} 

\newpage

\section{Preliminaries}
\label{sec_prel}
Let $\N$ denote the set of positive integers and let $\N_0 = \N\cup\{0\}$. 
Let $\mathbb{Z}$ denote the set of integers. For some $m\in\N$, denote by $[m]$ the set $\{1,2,...,m\}$ and let $[m]_0$ = $[m]\cup\{0\}$. With $\Sigma$, we denote a finite set of symbols, called an \emph{alphabet}. The elements of $\Sigma$ are called \emph{letters}. 
A \emph{word} over $\Sigma$ is a finite sequence of letters from $\Sigma$. 
With $\varepsilon$, we denote the \emph{empty word}. 
The set of all words over $\Sigma$ is denoted by $\Sigma^*$. Additionally, set $\Sigma^+ = \Sigma^*\setminus\{\varepsilon\}$.
Given some word $w\in\Sigma^*$, the word $w^\omega$ represents an infinite repetition of $w$, called \emph{circular word}.
The \emph{length} of a word $w\in\Sigma^*$, i.e., the number letters in $w$, is denoted by $|w|$; hence, $|\varepsilon| = 0$. 
For some $w\in\Sigma^*$, if we have $w = xyz$ for some $x,y,z\in\Sigma^*$, then we say that $x$ is a \emph{prefix}, $y$ is a \emph{factor}, and $z$ is a \emph{suffix} from $w$. We denote the set of all factors, prefixes, and suffixes of $w$ by $\Fact(w)$, $\Pref(w)$, and $\Suff(w)$ respectively. If $x\neq w$, $y\neq w$, or $z\neq w$ respectively, we call the respective prefix, factor, or suffix \emph{proper}. For some word $w = xy$, for $x,y\in\Sigma^*$, we define $w\cdot y^{-1} = x$ as well as $x^{-1}\cdot w = y$. A function $\theta: \Sigma^* \to \Sigma^*$ is called an \textit{antimorphic involution} if $\theta(\theta(\ta)) = \ta$ for all $\ta \in \Sigma$ and $\theta(w\tb) = \theta(\tb)\theta(w)$ for all $w \in \Sigma^*$ and $\tb \in \Sigma$. Watson-Crick complementarity, which primarily governs hybridization among DNA and RNA sequences, has been modeled as an antimorphic involution that maps {\tt A} to {\tt T} ({\tt U}), {\tt C} to {\tt G}, {\tt G} to {\tt C}, and {\tt T} ({\tt U}) to {\tt A}.  A nondeterministic finite automaton~(NFA) is a tuple
$A = (\Sigma, Q, q_0,\delta, F)$ where $\Sigma$ is
the input alphabet, $Q$ is the finite set of states,
$\delta \colon Q \times \Sigma \rightarrow 2^Q$ is
the multivalued transition function, $q_0 \in Q$ is the
initial state, and $F \subseteq Q$ is the set of final states.
In the usual way, $\delta$ is extended as a function
$Q \times \Sigma^* \rightarrow 2^Q$ and the language accepted
by $A$ is $L(A) = \{ w \in \Sigma^* \mid \delta(q_0, w) \cap F
\neq \emptyset \}$. The automaton $A$ is a deterministic finite
automaton~(DFA) if $\delta$ is a single valued partial function.
It is well known that deterministic and nondeterministic
finite automata recognize the class of {\em regular languages}~\cite{hopcroft}. Finally, for a regular expression $r$, writing it down, immediately refers to its language, e.g., writing $\ta(\ta|\tb)^*\tb$ refers to the set $\{w\in\{\ta,\tb\}^*\mid w[1] = \ta, w[|w|] = \tb\}$.

    \subsection{Hairpin deletion}

Hairpin deletion is a formal operation introduced in~\cite{ChoFSM25} with the purpose of modeling co-transcriptional splicing in RNA transcription. The computational power of the operation in various energy models (length restrictions between the stem and the loop) has been thoroughly investigated in~\cite{ChoFSM25}. Here we introduce only the elements relevant to our results (in the \emph{logarithmic energy model}).
A pair of words, $(u, v) \in \Sigma^* \times \Sigma^*$, serves as a context; $u$ and $v$ are particularly called \textit{left} and \textit{right} contexts. 
A set $C$ of such pairs is called a \textit{context set}. Hairpins are an atom of DNA and RNA structures. They can be modeled as $x \ell \theta(x)$ for words $x, \ell \in \Sigma^*$, where $x = b_1b_2 \cdots b_n$ and its Watson-Crick complement $\theta(x) = \theta(b_n) \cdots \theta(b_2)\theta(b_1)$ hybridize with each other via hydrogen bonds between bases $b_i$ and $\theta(b_i)$ into a stem, and $\ell$ serves as a loop (in reality $|\ell| \ge 3$ is required) \cite{KariLKST06}. Assume $\Sigma$ to be some alphabet, $w\in\Sigma^*$ to be some word over that alphabet, and $\theta$ to be some antimorphic involution on $\Sigma$. We call $w$ a \emph{hairpin} if $w = x\ell\theta(x)$, for some $x,\ell\in\Sigma^*$. Furthermore, let $c\in\N$ be some positive number. We call $w$ a \emph{logarithmically-bounded hairpin (log-hairpin)} if $|x| \geq c\cdot\log(\ell)$. We denote the set of all hairpins regarding $\Sigma$ and $\theta$ by $H(\Sigma,\theta)$ and the sets of all log-hairpins regarding $\Sigma$, $\theta$, and $c$ by $H_{log}(\Sigma,\theta,c)$. For readability purposes, we often write just $H$ or $H_{log}$ and infer $\Sigma$, $\theta$ and $c$ by the context. Notice that the logarithmic penalty of $\ell$ in log-hairpins allows for an exponential size of the loop with respect to the length of the stem.

 Using the notion of hairpins and log-hairpins, we can continue with the definition of bounded hairpin deletion. Splicing relies on the occurrence of left and right contexts, the formation of a stable hairpin containing the left context, and a potential margin between the hairpin and the occurrence of the right context. As only the logarithmic energy model is relevant in this paper, we focus only on log-hairpins.
 Figure~\ref{fig:CTS}(b) shows a string factorization where the logarithmic bounded hairpin deletion operation can be applied, reflecting the co-transcriptional splicing process.

\begin{definition} \label{def:operation}
    Let $S = (\Sigma,\theta,c,C,n)$ be a tuple of parameters, with $\Sigma$ an alphabet, $w\in\Sigma$ a word, $C$ a context-set, $\theta$ an antimorphic involution, $n\in\N$ a constant called \emph{margin}, and $c\in\N$ a multiplicative factor used in the logarithmic energy model definition. If $$w = w_p\ \alpha\ x\ \ell\ \theta(x)\theta(\alpha)\ z\ \beta\ w_s$$ for some $(\alpha,\beta)\in C$ and $w_p,w_s,x,\ell,z\in\Sigma^*$, then we say that $w_pw_s$ is \emph{obtainable by logarithmic bounded hairpin deletion} (or just \emph{log-hairpin deletion}) \emph{from $w$ over $S$} (denoted $w \hpdbloge_{S} w_pw_s,$) if and only if $\alpha x \ell \theta(x\alpha)\in H_{log}$ and $|z| \leq n$.
\end{definition}

For example, if $\theta$ is defined to represent the Watson-Crick complement, the context-set is set to $C = \{({\color{red}\tA\tU\tA},{\color{blue}\tC\tC\tC})\}$, the {\color{purple}margin} is set to $n=1$, the log-factor is set to $c=1$, and $w = \tA\tA\tU\tA\tA\tC\tC\tU\tU\tA\tU\tG\tC\tC\tC\tG\tA$, then we have
$$ \tA{\color{red}\tA\tU\tA}{\color{orange}\tA}\tC\tC{\color{orange}\tU\tU\tA\tU}{\color{purple}\tG}{\color{blue}\tC\tC\tC}\tG\tA \hpdbloge \tA\tG\tA $$
by $w_p = \tA$, $\alpha = {\color{red}\tA\tU\tA}$, $x = {\color{orange}\tA}$, $\ell = \tC\tC$, $\theta(x)\theta(\alpha)={\color{orange}\tU\tU\tA\tU}$, $z = {\color{purple}\tG}$, $\beta = {\color{blue}\tC\tC\tC}$, and $w_s = \tG\tA$.

Given some input word or some input language, we can consider the language of all words obtained if hairpin deletion is allowed to be applied multiple times. In practice, it is observed that co-transcriptional splicing occurs in a consecutive manner on the processed DNA sequence, meaning that newly created splicing sites resulting from earlier splicing operations do not affect the outcome. Hence, we introduce a parallel deletion model in which only non-overlapping hairpins present in the sequence at the beginning can be deleted, representing consecutive co-transcriptional splicing.

\begin{definition}
    Let $w\in\Sigma^*$ be a word and let $S$ be a tuple of parameters for log-hairpin deletion. For some $w'\in\Sigma^*$, we call it \emph{obtainable by parallel log-hairpin deletion of $w$ over $S$}, denoted $w \hpdblogpe_S w'$, if there exist $m\in\N$ and $u_1,\alpha_1,...,u_{m},\alpha_m,u_{m+1}\in\Sigma^*$ such that
    $$w = u_1\alpha_1u_2\alpha_2...u_m\alpha_mu_{m+1},$$
    $$w' = u_1u_2...u_mu_{m+1},$$
    and, for all $i\in[m]$, we have $\alpha_i\hpdbloge_S\varepsilon$, i.e., each $\alpha_i$ is removed by log-hairpin deletion. For readability purposes, we may just write $\hpdb$ and infer $S$ or $\mathtt{log}$ by context.
\end{definition}

For example, given the context set $C=\{({\color{red}\tA\tA},{\color{red}\tC\tC}),({\color{blue}\tC\tC},{\color{blue}\tC\tC})\}$, $\theta$ defined by $\theta(\tA) = \tU$ and $\theta(\tC) = \tG$, and the word $w = \tC\tA\tG\tA\tG\tU\tC\tU\tG\tC\tC\tA\tA\tG\tG\tG$, we could delete two factors at once:
$$\tC\tA\ 
{\color{red}\tA\tA}\tG{\color{orange}\tU\tU}{\color{red}\tC\tC}\ 
\tU\tG\ 
{\color{blue}\tC\tC}\tA\tA{\color{cyan}\tG\tG}{\color{blue}\tC\tC}\ 
\tG \hpdbpe \tC\tA\tU\tG\tG$$

Applying parallel log-hairpin deletion to a word or a given language can produce another language. Hence, given some $w\in\Sigma^*$ or $L\subseteq\Sigma^*$, the \emph{parallel log-hairpin deletion sets over $S$ of $w$ (or $L$)} are defined by $$[w]_{\hpdblogp_S} = \{\ w'\in\Sigma^*\mid w\hpdbloge_S w'\ \}\text{ (or }[L]_{\hpdblogp} = \bigcup_{w\in L}[w]_{\hpdblogp}\text{).}$$

In practice and for the purpose of the following constructions in the main body of the paper, we define a variant of the parallel hairpin deletion that uses certain necessary assumptions about the words contained in the language. First of all, if there exist isolated 5'SS (left contexts) in a word that is read from left to right, there is a high chance that it is used in the formation of the hairpin if a corresponding 3'SS exists~\cite{BeyerO88,MerkhoferHT14}. In that sense, we assume some greediness when it comes to selecting left contexts. Similarly, we could assume the same about the choice of the right context. For the construction that follows, only the first assumption is necessary. If there exist overlapping 5'SS in the word, thermodynamics might result in some nondeterministic choice of the actual 5'SS that is used~\cite{OreillyNB95, RocaSK05}. To obtain a model that follows these constraints, we introduce a variant of parallel deletion, called maximally parallel deletion, in which it is assumed that no left contexts survive in each part $u_i$.

\begin{definition}
    Let $w\in\Sigma^*$ be a word and $S$ be some tuple of parameters for log-hairpin deletion. For some $w'\in\Sigma^*$, we call it obtainable by \emph{maximally parallel log-hairpin deletion over $S$ of $w$}, denoted $w \hpdblogpmaxe_S w'$, if there exists some $m\in\N$ and $u_i,\alpha_i,u_{m+1}\in\Sigma^*$, $i\in[m]$, such that $w = u_1\alpha_1...u_n\alpha_nu_{n+1}$, $w' = u_1...u_{n+1}$, bounded hairpin deletion cannot be applied to $u_{n+1}$, and, for all $i\in[m]$, we have $\alpha_i \hpdbloge_S \varepsilon$ as well as, for all $(x,y)\in C$, we have $x\notin\Fact(u_i)$. 
\end{definition}

Again, we denote the set of all words obtainable by maximally parallel log-hairpin deletion by $[w]_{\hpdblogpmax_{S}}$ (analogously for input languages as before). Notice that $[w]_{\hpdblogpmax_{S}}\subseteq [w]_{\hpdblogp_{S}}$ as it is a more restricted variant of the parallel log-hairpin deletion set. This concludes all necessary introductory terminology regarding the formal model of hairpin deletion.

\section{Simulating Finite Automata with Maximal Parallel Log-Hairpin Deletion}
\label{sec_sim}
This section investigates the possibility to obtain arbitrary regular languages of RNA sequences from a circular template DNA sequence using bounded hairpin deletion. We provide a construction that allows for the simulation of arbitrary (non)deterministic finite automata (DFA/NFA) using maximally parallel bounded hairpin deletion in the logarithmic energy model on circular DNA. In particular it is shown, given some NFA $A$, that we can construct a word $w$ for which we have $[w^\omega]_{\hpdblogpmax} = L(A)$.

As an initial idea, each transition defined in some NFA $A$ was encoded on a circular word in a consecutive matter, i.e., if for example $A$ had the transitions $(q_i,\ta,q_j)$ and $(q_j,\tb,q_\ell)$, then $w$ would contain them directly as factors, resulting in a word $w = \cdots  (q_i,\ta,q_j) \cdots  (q_j,\tb,q_\ell) \cdots $. A context-set $C$ can be defined to contain some context $(\alpha,\beta)\in C$ with $\alpha = ,q_j)$ and $\beta=(q_j,$ that would allow for the potential deletion of the factor $,q_j)\cdots (q_j,$ and resulting in a word $w' = \cdots  (q_i,\ta\tb,q_\ell$, allowing for further parallel deletions. This model works quite intuitively, but resulted in various problems. For example, the controlled formation of hairpins with a stem and a loop posed a serious challenge. Also a controlled notion of termination was missing. Due to the above, we decided on a different approach. Now, instead of encoding all transitions in a parallel manner, we use a single factor $s_i$ per state $q_i$ that allows a non-deterministic transition to jump to some other factor $s_j$, encoding $q_j$, using hairpin deletion. The general spirit, however, stays the same, as we are still moving around a circular DNA template and select transitions to be taken non-deterministically by the application of hairpin deletion, leaving only the letters labeling those transitions to remain in the resulting words.

First, we need some section-specific definitions, as we are now considering and utilizing infinite repetitions of some circular DNA template. As mentioned in Section~\ref{sec_prel}, a circular word $w^\omega$ is an infinite repetition of some word $w\in\Sigma^*$. In order to extend the notion of hairpin deletion to infinite words, we need some notion of intentionally stopping the transcription process on infinite words.

In practice there are, among others, two different ways to stop the transcription process and detach the produced RNA from the polymerase. First, there is rho-dependent transcription termination which uses other molecules that bind to certain factors on the produced RNA~\cite{StewartLY86}. Second, there is rho-independent transcription termination that is based on the formation of a hairpin of certain length followed by a specific factor on the template DNA~\cite{CarafaBT90}. The rho-dependent model would be easier to embed in our definition, but it relies on external factors to work. Hence, to obtain a process that works as independent from external factors as possible, we will use the latter, rho-independent, model. 

Most of the time, after the final hairpin, a factor containing only a certain number of $\tU$'s/$\tT$'s is enough to result in the decoupling of the produced RNA sequence from the polymerase. To obtain a general model, we allow for arbitrary words to be used at this point, elements of some $T\subset\Sigma^*$, called the \emph{set of terminating-sequences}. We extend the notion of log-hairpin deletion by adding $T$ and a \emph{terminating stem-length} $m\in\N$ to the properties $S$, i.e., writing $S = (\Sigma,\theta,c,C,n,T,m)$ instead of $S=(\Sigma,\theta,c,C,n)$, if the processed word is a circular word $w^\omega$ over $w\in\Sigma^*$. The number $m$ represents the minimal length of the hairpin or stem that must be formed as a suffix in the RNA produced, before reading a terminating-sequence $t\in T$ in $w^\omega$. Using this, we can adapt the notion of log-hairpin deletion for circular words by adding the notion of termination.

\begin{definition}\label{def:circ-word-hpd}
    Let $w^\omega$ be a circular word over $w\in\Sigma^*$. Let $S=(\Sigma,\theta,c,C,n,T,m)$ be a tuple of properties for log-hairpin deletion as defined before. Assume there exists a finite prefix $w't$ of $w^\omega$ with $w'\in\Sigma^*$ and $t\in T$ such that $w' \hpdbe_S u$, for some word $u\in\Sigma^*$.
    If $u$ has a suffix $s$ of length $|s| = 2m$ such that $s$ forms a stem\footnote{The requirement that $s$ is only a stem of length $2m$ and not a hairpin with a stem of such length, i.e., no loop $\ell$ is formed, is a technical one chosen for simplicity reasons. A long enough stem always allows for a loop to be formed. Hence, adapting the encoded suffix responsible for transcription termination in the following construction in the proof of Theorem~\ref{theorem:dfa-simulation} should also be possible by setting $m$ short enough and adding enough letters to that suffix that do not interfere with the constructed context-set}, i.e., $s = x\theta(x)$, for some $x\in\Sigma^*$, then we say that $u\cdot s^{-1}$ is obtainable from $w^\omega$ by \emph{terminating log-hairpin deletion}.
\end{definition}

The remaining part of this section provides a general framework for constructing working DNA templates. Properties needed for the construction to work are provided. The following result is the main result of this section and contains the general construction of a template word $w$ that allows for the simulation of arbitrary regular languages represented by finite automata using terminating maximally parallel bounded log-hairpin deletion. We provide a basic construction for NFAs which naturally follows for DFAs and potentially other finite automata models such as GNFAs or GDFAs (which allow sequences as transition labels instead of just single letters).

\begin{theorem}\label{theorem:dfa-simulation}
    Let $A = (Q,\Sigma,q_1,\delta,F)$ be some NFA. There exists a word $w\in\Sigma^*$ and a properties tuple for log-hairpin deletion $S = (\Sigma,\theta,c,C,n,T,m)$ as defined before such that $L(A) = [w^\omega]_{\hpdblogpmax_{S}}$.
\end{theorem}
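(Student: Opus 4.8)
The plan is to encode $A$ on the circle so that one computation step of the automaton becomes one hairpin-deletion event. I would give each transition $e=(q_i,a,q_j)$ of $A$ its own block $B_e=\theta(\alpha_{q_i})\,\beta_e\,a\,\alpha_{q_j}$, where $\alpha_{q}$ is a long marker word associated with state $q$ (its complement $\theta(\alpha_{q})$ serving as the closing side of hairpins) and $\beta_e$ is a short marker identifying $e$ as a landing site. The template is $w=\alpha_{q_1}\,B_{e_1}B_{e_2}\cdots B_{e_M}$ (with $M$ the number of transitions, in any fixed order) and we process $w^\omega$; the context set is $C=\{(\alpha_{q_i},\beta_e)\mid e=(q_i,\cdot,\cdot)\text{ a transition}\}$. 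A step works as follows: having just kept the label of a transition ending in $q_i$, the reading head sits on an occurrence of $\alpha_{q_i}$; the only way to proceed (and, under maximal parallelism, a \emph{forced} way, since no left context may survive in a kept factor) is to form a hairpin $\alpha_{q_i}\,\ell\,\theta(\alpha_{q_i})$ whose loop $\ell$ absorbs all intervening blocks, closing at the $\theta(\alpha_{q_i})$ that opens some block $B_e$ of source $q_i$, with the adjacent $\beta_e$ as right context and margin $z=\varepsilon$. Deleting $\alpha_{q_i}\ell\theta(\alpha_{q_i})\beta_e$ leaves the head exactly on the letter $a$ of $B_e$, which is kept, after which $\alpha_{q_j}$ forces the next step. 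Thus the kept letters spell out the label sequence of a walk in $A$, and the nondeterministic choice of which occurrence of $\theta(\alpha_{q_i})$ to close at realizes the nondeterministic choice of outgoing transition.

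Since the loop of each jump spans at most one period of $w^\omega$, we have $|\ell|\le|w|$, so choosing every state marker to satisfy $|\alpha_q|\ge c\log|w|$ makes each jump a genuine log-hairpin; as markers may be taken of length $O(\log|w|)$ and $|w|=O(M\log|w|)$, this is consistent. The leading $\alpha_{q_1}$ forces the first step to start from $q_1$, so every kept word is the label of a walk \emph{from the initial state}. For acceptance I would add, for each final state $q_j$, a terminating block $\theta(\alpha_{q_j})\,\beta_{\mathrm{term},j}\,x_j\theta(x_j)\,t$ with $t\in T$ a dedicated terminator symbol and $x_j\theta(x_j)$ an inert stem of length $2m$, and put $(\alpha_{q_j},\beta_{\mathrm{term},j})\in C$ \emph{only} for final $q_j$. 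A jump into such a block, possible only from a final state, leaves $x_j\theta(x_j)\,t$; cutting the prefix at $t$ makes $x_j\theta(x_j)$ the length-$2m$ stem suffix of the processed word, which terminating log-hairpin deletion strips off, so the produced word is exactly the walk's label and termination is available only at final states.

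With the construction fixed, the proof splits into two inclusions. For $L(A)\subseteq[w^\omega]_{\hpdblogpmax_{S}}$ I would take an accepting run $q_1\xrightarrow{a_1}\cdots\xrightarrow{a_k}q\in F$, read the prefix of $w^\omega$ visiting $B_{e_1},\dots,B_{e_k}$ and then the terminating block of $q$, and verify the routine conditions: the jump hairpins are pairwise non-overlapping (so the deletion is genuinely parallel), each is a valid log-hairpin with empty margin, no left context survives in the kept single letters, and the final stem is removed on termination; this yields exactly $a_1\cdots a_k$.

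The reverse inclusion $[w^\omega]_{\hpdblogpmax_{S}}\subseteq L(A)$ is where I expect the real work. One must show that \emph{every} admissible maximally-parallel terminating deletion of \emph{every} prefix of $w^\omega$ keeps only the label of an accepting walk, i.e.\ that no spurious parse exists. The argument hinges on the maximal-parallelism condition: because no left context $\alpha_{q_i}$ may remain in a kept factor, every occurrence of $\alpha_{q_i}$ the head reaches must open a deleted hairpin, and the only members of $C$ force that hairpin to close at some $\theta(\alpha_{q_i})\beta_e$, i.e.\ at the start of a block whose transition has source $q_i$. This should pin the entire decomposition down to an alternation of kept transition-letters and jump-hairpins tracing a walk that starts at $q_1$ (forced by the opening $\alpha_{q_1}$) and can only stop after entering a final state (forced by the terminating blocks and by $t$ occurring nowhere else). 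The delicate points, to be discharged by a careful self-delimiting choice of the marker words, are that no marker occurs unintendedly as a factor of another (so that no hairpin closes at a wrong site and no length-$2m$ stem or terminator appears spuriously) and that single kept letters never contain a left context. Over a small alphabet such as $\Sigma_{\mathtt{RNA}}$ this forces the explicit marker encoding promised right after the theorem, and it also needs $|\Sigma|\ge 2$ so that $\theta$ supplies genuine complementarity, the unary case being degenerate.
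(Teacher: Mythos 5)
Your construction is sound in outline but takes a genuinely different route from the paper's. You encode one block per \emph{transition}, $B_e=\theta(\alpha_{q_i})\beta_e\,a\,\alpha_{q_j}$, simulate one automaton step by one hairpin deletion, and realize nondeterminism by letting the shared left context $\alpha_{q_i}$ close at any of several sites $\theta(\alpha_{q_i})\beta_e$ (i.e.\ several pairs $(\alpha_{q_i},\beta_e)\in C$ with the same left component). The paper instead encodes one block $s_i$ per \emph{state}, simulates each step by \emph{two} deletions (an intra-block ``transition selection'' followed by a jump to $s_j$), and places the nondeterminism entirely in the choice among nested left-context prefixes $S_{i,1},\,S_{i,1}S_{i,2},\dots$, so that every left context has a \emph{unique} right context and a unique closing site. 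This is a deliberate design decision: the authors describe essentially your per-transition layout as their abandoned ``initial idea,'' citing difficulty in controlling which occurrence of the complementary marker the stem closes on. Your scheme survives in the formal model because Definition~\ref{def:operation} only asks for the existence of a factorization ending at some $\theta(\alpha)z\beta$ with $(\alpha,\beta)\in C$, and maximal parallelism constrains only left contexts, so right-context choice is legitimately nondeterministic; but note that your design \emph{depends} on non-greedy right-context selection, whereas the paper's works even under right-context greediness and keeps the transition-selection loops local to a single state block. What your approach buys is a shorter, conceptually simpler template (one deletion per letter, size $O(M\log M)$ for $M$ transitions) and a cleaner correspondence between deletions and automaton steps.

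Two caveats. First, everything you defer as ``delicate points'' is where the paper spends most of its effort: the full reverse inclusion $[w^\omega]_{\hpdblogpmax_{S}}\subseteq L(A)$ is an induction showing that maximality forces $u_1=\varepsilon$, that each kept factor is a single transition letter, and that termination is reachable only through a final state's block; and the self-delimiting marker design over $\Sigma_{\mathtt{RNA}}$ (no context occurring as an unintended factor, transition letters never extending a context) occupies a separate appendix. In your layout the corresponding analysis is heavier, because $\theta(\alpha_{q_i})$ occurs once per outgoing transition of $q_i$ \emph{and} once per skipped period, so you must argue that every admissible closing site is one of the intended block heads. Second, your per-final-state terminating blocks put several copies of $t$ into the interior of $w$, so the terminating-deletion condition of Definition~\ref{def:circ-word-hpd} must be checked against every such prefix $w't$, not just the one at the end of the period as in the paper; this is manageable but needs to be argued explicitly.
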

\begin{proof}
    Let $A = (Q,\Sigma,q_1,\delta,F)$ be some NFA with $Q = \{q_1,\dots ,q_o\}$, for some $o\in\N$, and $\Sigma = \{\ta_1,\dots ,\ta_\sigma\}$, for some $\sigma\in\N$. We begin by an explanation of the basic idea. Then, we continue with a formal construction which is then given an intuitive explanation that utilizes images afterwards. Due to space constraints, the full proof of the correctness of the construction is given in Appendix~\ref{section:appendix-simulation-proof}. Additionally, an example of an actual implementation using the RNA alphabet $\Sigma_{\mathtt{RNA}} = \{\tA,\tU,\tC,\tG\}$ for a given NFA is provided in Appendix~\ref{sec:appendix-simulation-example-specific}.

    \textbf{Basic idea:} For each state $q_i\in Q$, we construct a word $s_i\in\Sigma^*$ that represents this state and the outgoing transitions from it, as defined by $\delta$. Each period $w\in\Sigma^*$ of the circular word $w^\omega$, over which transcription is done, will be essentially made up of a concatenation of all $s_i$'s with the addition of one final word $s_e$, i.e., we obtain $w = s_1s_2\cdots s_os_e$ (see Figure~\ref{fig:simulation-circular-transcription}). The suffix $s_e$ handles transcription termination. Hairpin deletion will be used to simulate a transition between two connected states $q_i$ and $q_j$, while reading a letter $\ta\in\Sigma$. This is done by jumping from $s_i$ to $s_j$ by removing everything but the letter $\ta$ between $s_i$ and $s_j$. This is implemented using a two-step process for which context-sets will be deliberately designed such that these hairpin deletion steps actually simulate the process of reading letters in the automaton $A$. 
    To terminate transcription, in final-states, a technical transition to the end of the template can be used that jumps to a factor $t\in T$ at the end of $w$ while obtaining a suffix which consists of a stem of size $2m$ at the end of the transcribed word. By that, we successively build prefixes of words in $L(A)$ and may finalize the transcription process anytime a final state is reached, effectively obtaining only words in $L(A)$. 

    \begin{figure}[h]
    \centering
    \includegraphics[width=5.25cm]{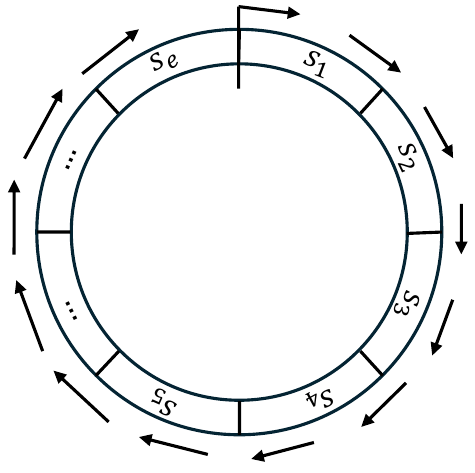}
    \caption{Representation of circular transcription (maximally parallel hairpin deletion over $w^\omega$).}
    \label{fig:simulation-circular-transcription}
    \end{figure}

    \textbf{Construction:} We continue with the formal construction of $w$. After that, we give an intuitive sketch of the functionality.\footnote{Consider reading the intuition-part in parallel to the construction-part for a better understanding.} We need some general assumptions. $\theta$ is not defined specifically in the general case, but rather implied by the constraints we give in this proof. The logarithmic factor $c$ is set to $1$ in this construction, but larger numbers also work. The margin number $n$ is set to $0$ in this construction. Such a constant may be added but is not necessary for this construction to work. $\Sigma$ is copied from the definition of $A$, but a distinct alphabet may simplify the construction. For practical implementation purposes, we use this assumption to keep the possible alphabet size as low as possible. Consider the construction in the Appendix to see an actual working example of an encoding for an alphabet size of $4$. $T$ and $m$ are not explicitly specified, but construction constraints are given.

    In the final construction, we obtain a word $w = s_1\ \cdots \ s_o\ s_e$ for words $s_i\in\Sigma^*$, $i\in[o]$, representing the states, and a technical final word $s_e\in\Sigma^*$ that is also responsible for transcription termination. From now on, for each newly introduced word, assume that its specification or characterization via constraints is given later on in the proof (if it is not given immediately or before). Also, assume that any word that occurs as one side of a context in the context-set $C$ (which is to be constructed later) does not appear anywhere else by other overlapping factors in $w$.
    
    We start with the construction of $s_e$. Let $f_s,f_e\in\Sigma^*$ be two words such that $|f_sf_e| = 2m$ and $f_s = \theta(f_e)$, hence, $f_sf_e$ forms a stem of length $2m$, i.e., $f_e = \theta(f_s)$. Let $T = \{t\}$ be a set of terminating sequences containing only one word $t\in\Sigma^*$. We set the suffix $s_e$ by
    $$s_e = \theta(S_{end})E_{end}\ f_e\ t\ \theta(S_{q_1})E_{q_1}$$
    for words $S_{end},E_{end},E_{q_1}\in\Sigma^*$. Next, for each state $q_i\in Q$, we construct a word $s_i\in\Sigma^*$ by setting
    $$ s_i = S_{i,1}\ \cdots \ S_{i,k}\ S_{i,k+1}\ e_{i,1}\ \cdots \ e_{i,k}\ e_{i,k+1}\ \theta(S_{q_{i+1}})\ E_{q_{i+1}} $$
    for words $S_{q_{i+1}},E_{q_{i+1}},S_{i,1},\dots ,S_{i,k+1},e_1,\dots ,e_{k_+1}\in\Sigma^*$, with $k$ being the number of outgoing transitions of $q_i$. If $q_i = q_o$, then $S_{q_{i+1}} = E_{q_{i+1}} = \varepsilon$ is just an empty suffix. Assume some arbitrary ordering on the outgoing transitions of $q_i$ and the transition labels to be given by $\ta_{i,1}$ to $\ta_{i,k}$.
    Then, each $e_{i,j}$, for $j\in[k]$, is defined by
    $$ e_{i,j} = \theta(S_{i,1}\cdots S_{i,j})\ E_{i,j}\ \ta_{i,j}\ S_{q_{i,k}} $$
    for the words $S_{q_{i,j}}E_{i,j}\in\Sigma^*$. Suppose that $S_{i,j} = S_{q_{j'}}$ for some $j'\in[o]$, let $q_{i,j}$ be the state reached by the $\nth{j}$ transition of $q_i$. If we have $q_i\in Q\setminus F$, i.e., it is not a final state, then we set $S_{i,k+1} = e_{i,k+1} = \varepsilon$ to be empty. If we have $q_i\in F$, i.e., it is a final state, then $S_{i,k+1}\in\Sigma^+$ is a nonempty word and $e_{i,k+1}$ is given by
    $$e_{i,k+1} = \theta(S_{i,1}\cdots S_{i,k+1})\ E_{i,k+1}\ f_s\ S_{end}$$
    for the word $E_{i,k+1}\in\Sigma^*$ and the others as defined above. Notice that the only differences to the other $e_j$'s are, first, the fact that, instead of a letter $\ta$, we write the word $f_s$ and, second, instead of the suffix $S_{q_{i,j}}$, we add the word $S_{end}$. This concludes all structural elements.
    
    We continue with the definition of the context-set $C$ over the words defined above. For each $i\in [o]$, we assume $(S_{q_i},E_{q_i})\in C$. These are contexts responsible for jumping between different $s_i$ and $s_j$ using hairpin deletion. In addition to $i\in[o]$, for each $j\in[k]$ (or $j\in[k+1]$ if $q_i\in F$), $k$ being the number of outgoing transitions of the state $q_i$ as before, we assume $(S_{i,1}\cdots S_{i,j},E_{i,j})\in C$. Finally, we assume $(S_{end},E_{end})\in C$, concluding the definition of $C$.
    
    Again, we mention that we assume that each occurrence of a left or right context in $w$ is exactly given by the above construction. More occurrences resulting from overlapping factors are excluded by assumption. Also, for each context $(\ell,r)\in C$, we assume that each left context $\ell$ is chosen long enough so that it forms a valid hairpin with $\theta(\ell)$ regarding the logarithmic energy model for each occurrence of $\ell$ with the closest occurrence of the right context $r$ after it. Keep in mind that each left context has a unique corresponding right context, so no non-deterministic choice can happen here. We continue with an intuitive explanation of the functionality of the construction.

    \textbf{Intuition:} Transcription starts in $s_1$ and moves around $w^\omega$ over and over again (see Figure~\ref{fig:simulation-circular-transcription}). Every time the transcription is at the beginning of one of the factors $s_i$, $i\in[o]$, a nondeterministic choice of one of the encoded transitions occurs. By the selection of one of the factors $S_{i,1}$, $S_{i,1}S_{i,2}$, \dots  , $S_{i,1}S_{i,2}\cdots S_{i,k}$ as a left context, we are forced to use the corresponding right context $E_{i,j}$ (if the prefix $S_{i,1}\cdots S_{i,j}$ is chosen, $j\in{k}$). Everything in between gets removed by bounded hairpin deletion, using the stem $S_{i,1}\cdots S_{i,j}$ with corresponding $\theta(S_{i,1}\cdots S_{i,j})$ and a loop containing everything in between. This represents the choice of an outgoing transition of $q_i$. Immediately after the removed hairpin, the letter $\ta_{i,j}$ occurs in $s_i$. See Figure~\ref{fig:simulation-transition-selection} for a visualization of this process.
    
    \begin{figure}[h]
    \centering
    \includegraphics[width=14cm]{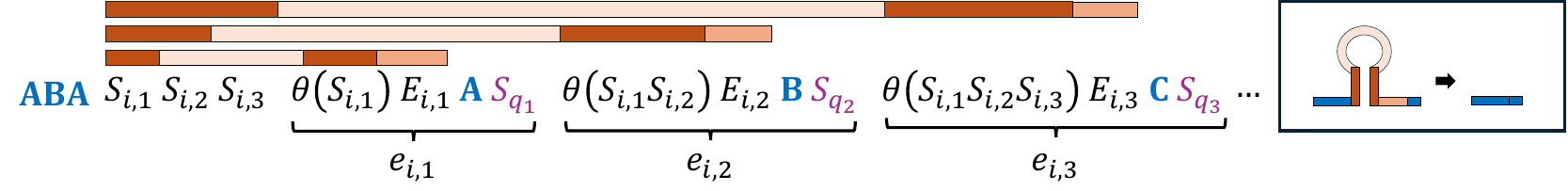}
    \caption{Nondeterministic selection of a transition out of the encoded $q_i\in Q$ in $s_i$. This example assumes $3$ outgoing transitions and the respective letters $\mathtt{A}$, $\mathtt{B}$, and $\mathtt{C}$. Depending on the choice of the left context (dark color - left), a corresponding right context (medium color - right) can be chosen and a hairpin with the corresponding $\theta$ part (dark color - right) is formed, having everything in between as part of the loop (bright color - middle). Finally, the whole marked region is then removed by hairpin deletion, leaving the blue marked letter as a new prefix.}
    \label{fig:simulation-transition-selection}
    \end{figure}
    
    Keep in mind, that the lengths of $S_{i,j'}$, $j'\in[j]$, and $E_{i,j}$ have to be adapted accordingly for this to work. Due to maximally parallel hairpin deletion, the next left context in $w$, in particular in $s_i$, has to be chosen for hairpin deletion. The immediate next left context is $S_{\delta(q_i,\ta_j)}$ from which we have to choose the unique right context $E_{\delta(q_i,\ta)}$ to jump to $s_{q_{i,k}}$. This is the aforementioned jump between $s_i$ and $s_{q_{i,k}}$. See Figure~\ref{fig:simulation-state-jump} for a visualization of this process.

    \begin{figure}[h]
    \centering
    \includegraphics[width=11cm]{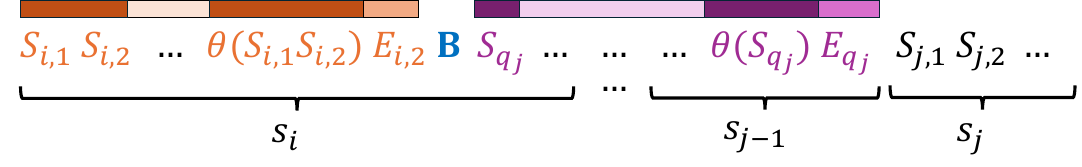}
    \caption{Representation of a jump between $2$ state encodings $s_i$ and $s_j$. Assuming some transition to be selected and the corresponding part to be removed by hairpin deletion (orange marked region), a deterministic left and right context selection results in everything between the letter $\mathtt{B}$ and the state encoding $s_j$ to re removed (purple marked region).}
    \label{fig:simulation-state-jump}
    \end{figure}
    
    After these two steps (transition selection and jump to the next state), we can repeat this process. A prefix of a word in the language $L(A)$ is successively obtained by maximally parallel bounded hairpin deletion. To terminate transcription, in a final state, we can use the same mechanism to choose the left context $S_{i,1}S_{i,2}\cdots S_{i,k}S_{i,k+1}$ and the right context $E_{i,k+1}$ to obtain a suffix stem $f_sf_e$ which is followed by $t\in T$ in $w^\omega$, which results in transcription termination. See Figure~\ref{fig:simulation-final-state} for a visualization.
    
    \begin{figure}[h]
    \centering
    \includegraphics[width=12cm]{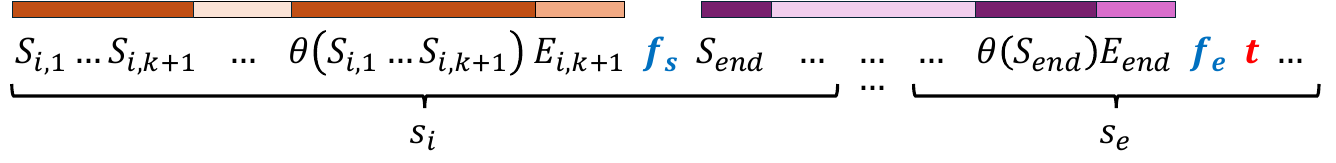}
    \caption{Assuming $s_i$ encodes a final state with $k$ outgoing transitions, then a transition $k+1$ can be taken to obtain $f_s$ as a suffix after some previously obtained word (orange marked region). From there, a deterministic choice of left and right contexts $S_{end}$ and $E_{end}$ allows for a jump to $s_e$ (purple marked region), resulting in the suffix $f_sf_et$. As $t\in T$ and $f_sf_e$ forms a stem with length $2m$ by assumption, transcription can be terminated here.}
    \label{fig:simulation-final-state}
    \end{figure}
    
    Hence, as termination can only be initiated from final $s_i$'s that represent final states $q_i\in F$, all words obtainable from $w^\omega$ by maximally parallel bounded hairpin deletion must be words in $L(A)$. As all transitions are encoded and available to be chosen in the beginning of each $s_i$, we can also obtain all words in $L(A)$, concluding this sketch. As mentioned before, due to space constraints, a full formal proof of correctness, i.e., that  $L(A) = [w^\omega]_{\hpdblogpmax_{S'}}$, can be found in Appendix~\ref{section:appendix-simulation-proof}. An engineered example that uses actual words over the RNA alphabet $\Sigma_{\mathtt{RNA}} = \{\tA,\tU,\tC,\tG\}$ can be found in the  Appendix~\ref{sec:appendix-simulation-example-specific}.
\end{proof}

For this construction to work, we clearly needed some assumptions regarding the hairpin deletion model to simulate co-transcriptional splicing. These are especially the assumptions that we are able to do this in a parallel manner (modeling the independence of subsequent co-transcriptional deletions) and that we have some level of greediness in the system (maximally parallel bounded hairpin deletion). Without some greediness assumption regarding the left context, an infinite number of points where deletion could happen can be skipped, and we could start simulating a computation on $A$ at the beginning of every iteration of $w^\omega$. We continue with a general framework that implements the construction from Theorem~\ref{theorem:dfa-simulation} using the RNA alphabet $\Sigma_{\mathtt{RNA}} = \{\tA,\tU,\tC\tG\}$, sketching the existence of working encodings.

\subsection{Applying Theorem~\ref{theorem:dfa-simulation} for the RNA-Alphabet $\Sigma_{\mathtt{RNA}}=\{A,U,C,G\}$}
\label{sec_sim_rna-framework}

For any working encoding, the following four questions need to be answered:
\begin{itemize}
\item[1.] How to encode the internal transition selection in each $s_i$?
\item[2.] How to realize the transitions between states, i.e., the jump between arbitrary $s_i$ and $s_j$?
\item[3.] How to realize termination of the parallel hairpin deletion process?
\item[4.] Do the words that encode contexts $(x,y)\in C$ appear only in the intended positions?
\end{itemize}
In the main part, we discuss these questions in a more general manner. A specific example of encoding a specific NFA using this general framework can be found in Appendix~\ref{section:appendix-simulation-example}.

Consider some NFA $A = (Q,\Sigma_{\mathtt{RNA}},q_1,\delta,F)$ with $Q = \{q_1,\dots ,q_o\}$, for some $o\in\N$. Let $q_i\in Q$, for $i\in[o]$, be some state in $Q$. Assume $q_i$ has $k$ transitions, for some $k\in\N$. For each $j\in[k]$, let $\ta_{i,j}$ be the letter on the $\nth{j}$ transition of $q_i$ and let $q_{i,j}\in\delta(q_i,\ta_{i,j})$ be the resulting state from $q_i$ by taking the $\nth{j}$ transition.

(1) First, we define $\theta$ to represent the strongest base pair bonds, i.e., $\theta(\tA) = \tU$, $\theta(\tU) = \tA$, $\theta(\tC) = \tG$, and $\theta(\tG) = \tC$. To encode $q_i$ and its outgoing transitions in a word $s_i\in\Sigma_{\mathtt{RNA}}^*$, generally, $s_i$ will have the form as described in the construction given in Theorem~\ref{theorem:dfa-simulation}, i.e., if w.l.o.g. $q_i\notin F$,
$$ s_i = S_{i,1}\cdots S_{i,k}\ e_{i_1}\cdots e_{i_k}\ \theta(S_{q_{i+1}})E_{q_{i+1}}. $$
We can set 
$$S_{i,1}\cdots S_{i,k} = \tA\tA(\tC^{i})\tA\tA(\tG\tA\tA)^k.$$
Specifically, we set $S_{i,1} = \tA\tA(\tC)^i\tA\tA\tG\tA\tA$ and, for each $j\in[k]$ with $j > 1$, we set $S_{i,k} = \tG\tA\tA$. For example, if $q_i$ has $3$ transitions, then $$S_{i,1}\cdots S_{i,3} = \tA\tA(\tC)^i\tA\tA\tG\tA\tA\tG\tA\tA\tG\tA\tA,$$
and we have $S_{i,1} = \tA\tA(\tC)^i\tA\tA\tG\tA\tA$ and $S_{i,2} = S_{i,3} = \tG\tA\tA$. Now, by the definition of $\theta$, we obtain, for each $j\in[k]$, 
$$\theta(S_{i,1}\cdots S_{i,j}) = (\tU\tU\tC)^j\tU\tU(\tC)^i\tU\tU.$$
From before, we know that each $e_{i,j}$, $j\in[k]$, we have
$$ e_{i,j} = \theta(S_{i,}\cdots S_{i,j})\ E_{i,j}\ \ta_{i,j}\ S_{q_{i,j}}.$$
We obtain $\theta(S_{i,}\cdots S_{i,j}$) implicitly by the previous definition. For simplicity reasons, we can set $E_{i,j} = \theta(S_{i,}\cdots S_{i,j})$ (hence, equal to the $\theta(S_{i,}\cdots S_{i,j})$ part). This concludes already the implementation of the contexts $(S_{i,1}\cdots S_{i,j},E_{i,j})\in C$. The letter $\ta_{i,j}$ is just the letter of the transition. The final parts of $s_i$ that need to be defined are $S_{q_{i,j}}$ and $E_{q_{i+1}}$. More generally, we need to define the words for each context $(S_{q_m},E_{q_m})\in C$, for $m\in[o]$. Let $o_m\in\N$ be a positive number for each state $q_m\in Q$. Then, we set $S_{q_m} = \tU\tU\tU(\tG)^{o_m}\tU\tU\tU$ and $E_{q_m} = \theta(S_{q_m})$. We discuss the numbers $o_m$ later when discussing the second question from above. With that, we have defined all parts of $s_i$.

To select any transition, pick a prefix $S_{i,1}\cdots S_{i,j}$, $j\in[k]$, find the corresponding right context $E_{i,j}$, and then notice that before $E_{i,j} = \theta(S_{i,1}\cdots S_{i,j})$, there is another occurrence of $\theta(S_{i,1}\cdots S_{i,j})$. We obtain the following hairpin structure (red) with corresponding left and right contexts and the remaining letter $\ta_{i,j}$:
$$
\underbrace{{\color{red} \tA\tA(\tC)^i\tA\tA(\tG\tA\tA)^j}}_{S_{i,1}\dots S_{i,j}}\ldots\underbrace{{\color{red}(\tU\tU\tC)^j\tU\tU(\tG)^i\tU\tU}}_{\theta(S_{i,1}\dots S_{i,j})}\underbrace{(\tU\tU\tC)^j\tU\tU(\tG)^i\tU\tU}_{E_{i,j}}\ {\color{blue} \ta_{i,j}\ S_{q_{i,j}}}
$$
If, due to the logarithmic energy model, the length of the inner part between $S_{i,1}\dots S_{i,j}$ and $\theta(S_{i,1}\dots S_{i,j})$ gets too large, in the position of the last $\tG$ in $S_{i,1}\dots S_{i,j}$ (and in the position of the first $\tC$ in $\theta(S_{i,1}\dots S_{i,j})$, we can increase the number of $\tG$'s (and $\tC$'s respectively) until the stem is supports the length of the loop. We might have to do this for multiple transitions in a single state encoding $s_i$ until the values balance out, but due to the exponential increase in supported size per letter in the stem, we can always reach a length that works out. This concludes the question on how to encode the transitions selection using the alphabet $\Sigma_{\mathtt{RNA}}$.

(2) Next, we briefly discuss, how to realize the jump between transition-encodings $s_i$ and $s_m$. Notice, that we have already defined the encoding of the related contexts $(S_{q_m},E_{q_m})\in C$, $m\in[o]$, by $S_{q_m} = \tA\tA\tA(\tC)^{o_m}\tA\tA\tA$ and $E_{q_m} = \theta(\tA\tA\tA(\tC)^{o_m}\tA\tA\tA) = \tU\tU\tU(\tG)^{o_m}\tU\tU\tU$. As before with the number of $G$'s and $C's$ in the transition encodings, we left the number of $G$'s and $C$'s in this case open as well. Depending on the number of letters between each left and right context $(S_{q_m},E_{q_m})\in C$, a different number might be needed to obtain a stem that supports the length of the loop. In the logarithmic energy model, we can always find such numbers as a linear increase in stem size results in an exponential increase of supported loop size. In the example case above, notice that after ${\color{blue}\ta_{i,j}}$, there is a unique occurrence of $S_{q_{i,j}} = \tA\tA\tA(\tC)^{o_m}\tA\tA\tA$, assuming $q_m = q_{i,j}$. The unique occurrence of $E_{q_m}$ is right in front of $s_m$. By the construction, we obtain the following unique hairpin formation. Here, we have no overlapping left contexts. So, no nondeterministic choice is possible, This hairpin deletion step must occur, if the letter before is used in an obtained word:
$$
{\color{blue} \ta_{i,j}}\ \underbrace{{\color{red} \tA\tA\tA(\tC)^{o_m}\tA\tA\tA}}_{S_{q_m}}\ldots\underbrace{{\color{red}\tU\tU\tU(\tG)^{o_m}\tU\tU\tU}}_{\theta(S_{q_m})}\underbrace{\tU\tU\tU(\tG)^{o_m}\tU\tU\tU}_{E_{q_m}}\ {\color{blue} s_m}
$$
This concludes the discussion on how to realize the jumps between state encodings $s_i$ and~$s_m$. 

(3) Finally, we need to encode the part that results in termination of the hairpin deletion process. For that, we define $s_e$, i.e., the words $f_s$, $f_e$, $t$, and in particular the words in the context $(S_{end},E_{end})\in C$. In principle, we can consider $s_e$ as its own state encoding but without any transitions. So, similar to the contexts $(S_{q_m},S_{q_m})\in C$, for $m\in[o]$, we can define, for some $o_e\in\N$ that is not equal to any other $o_m$, $S_{end} = \tA\tA\tA(\tC)^{o_e}\tA\tA\tA$ as well as $E_{end} = \theta(S_{end}) = \tU\tU\tU(\tG)^{o_e}\tU\tU\tU$. Additionally, we set $t = (\tU)^{10}$, $f_s = (\tA)^4\tG(\tA)^4$, and $f_e = (\tU)^4\tC(\tU)^4 = \theta(f_s)$. In the encoding of each final state $s_i$, for some $q_i\in F$, we handle $S_{i,k+1}$, assuming $q_i$ has $k$ outgoing transitions, and $E_{i,k+1}$ analogously to any other transition and add $f_s$ in the place where the letter of a transition would have been. This results in the following 2 hairpin deletion steps in a final state. Assuming $w\in\Sigma^*$ is a word in $L(A)$ which has been obtained already as a prefix and $q_i$ being the final state reached after reading $w$, the following becomes a general possibility:
$$
{\color{blue} w}\ 
\underbrace{{\color{red} \tA\tA(\tC)^{i}\tA\tA(\tG\tA\tA)^{k+1}}}_{S_{i,1}\cdots S_{i,k+1}}
\ldots
\underbrace{{\color{red} (\tU\tU\tC)^{k+1}\tU\tU\tG\tU\tU}}_{\theta(S_{i,1}\cdots S_{i,k+1})}\ 
\underbrace{(\tU\tU\tC)^{k+1}\tU\tU\tG\tU\tU}_{E_{i,n+1}}\ 
\underbrace{{\color{blue}(\tA)^4\tG(\tA)^4}}_{f_s}\ 
\underbrace{{\color{blue}\tA\tA\tA(\tG)^{o_e}\tA\tA\tA}}_{S_{end}}
$$
$$
{\color{blue} w}\ 
\underbrace{{\color{blue}(\tA)^4\tG(\tA)^4}}_{f_s}\ 
\underbrace{{\color{red}\tA\tA\tA(\tG)^{o_e}\tA\tA\tA}}_{S_{end}}\ 
\ldots\ 
\underbrace{{\color{red}\tU\tU\tU(\tC)^{o_e}\tU\tU\tU}}_{\theta(S_{end})}\ 
\underbrace{{\color{black}\tU\tU\tU(\tC)^{o_e}\tU\tU\tU}}_{E_{end}}\ 
\underbrace{{\color{blue}(\tU)^4\tC(\tU)^4}}_{f_e}\ 
\underbrace{{\color{orange}\tU\tU\tU\tU\tU\tU\tU\tU\tU\tU}}_{t}
$$

(4) We can see that, if only the intended hairpin deletion steps are possible then this encoding effectively works, assuming the words in the contexts are pumped to be big enough. However, it needs to be made sure that hairpin deletion and context recognition can only occur in the intended positions and that no other factor can be used for that purpose. By checking the encodings, one can make sure, that no left or right context appears in unintended positions as a factor. See Appendix~\ref{sec:appendix-simulation-example-general-examination} for a detailed examination on why this is the case. 

This concludes all elements needed to encode an arbitrary NFA $A$ in some circular word $w^\omega$ such that $L(A) = [w]_{\hpdblogpmax}$. As mentioned before, in Appendix~\ref{sec:appendix-simulation-example-specific}, an exemplary encoding of a specific NFA using this framework is given. 

This concludes this section. It has been shown that the mechanism of log-hairpin deletion, modeling co-transcriptional splicing, can be used to encode any (infinite) regular language by encoding its DFA or NFA representation on a finite circular DNA word. As the construction grows with the size of the DFA or NFA at hand, minimizing the input NFAa is a second problem that can be looked at. Due to the practical nature of this problem, even restricted models of NFAs could be considered.

\section{Minimizing Input NFA's}
\label{sec_dec}
The construction from the previous section makes it possible to produce any given finite language from powers of a word~$w$ by parallel hairpin deletion. As our goal is to engineer DNA templates that efficiently encode a given set of sequences to be produced by co-transcriptional splicing, Theorem~\ref{theorem:dfa-simulation} shows that we can achieve our goal, and we can focus on optimizing the length of $w$, i.e., the period. To allow further efficiency gains and some leeway for possible lab implementations of the construction, we can relax the requirement that a given finite set is produced \emph{exactly}, that is, without any other words obtained by the system. With implementation in mind, we require the `extra' sequences produced to not interfere with the words in the target set of sequences, i.e., that they do not share hybridization sites with our target. 

We can formalize this as a set of forbidden patterns to be avoided as factors by the `extra' words. In some cases, we may also assume that sequences above given lengths can be efficiently filtered out from the resulting set. We propose the following problem(s).

\begin{problem}[Small automaton for cover set avoiding forbidden patterns]\label{problem:small-cover-set-avoiding-forbidden-patterns}
    Given a set $W=\{w_1,\dots,w_k\}$ of target words and a set $F=\{f_1,\dots,f_\ell\}$ of forbidden patterns, such that each $f_i\in F$ is a factor of some $w_j\in W$, find a smallest NFA $M$, such that:
    \begin{itemize}
        \item $W\subseteq L$, and
        \item $\left( L\setminus W \right) \cap \Sigma^* F\Sigma^* = \emptyset$,
    \end{itemize}
    where either
    \begin{itemize}[leftmargin=2cm]
        \item[(exact)] $L=L(M)$, in the general case, or
        \item[(cover)] $L=L(M)\cap \Sigma^{\leq n}$ for $n=\max \{|w| \mid w\in W\}$, in the length bounded setting of the problem.
    \end{itemize} 
\end{problem}

Problem \ref{problem:small-cover-set-avoiding-forbidden-patterns} can be either considered as a minimization problem for the number of states or a minimization problem for the number of transitions (notice that the size of the encoding in Section \ref{sec_sim} primarily depends on the number of transitions). The following example shows that the problem is not trivial in the sense that there exist target and forbidden pattern sets $W,F$ for which the smallest NFA $M$ satisfying the conditions is smaller than both the minimal NFA for $W$ and the minimal NFA for $\overline{\Sigma^*F\Sigma^*}\cup W$. 

\begin{example}
    Let $W=\{aab^kaa\}$ and $F=\{ab^ka\}$. It is easy to see that the minimal NFA for $W$ must have at least $k+5$ states and $k+4$ transitions, otherwise it would have a loop which reads $b^k$ and it would accept an infinite language. Similarly, the minimal NFA for the language containing $W$ but avoiding all other words having forbidden patterns from $F$,  that is, $\overline{\Sigma^*F\Sigma^*}\cup W$, has at least $|Q| \in \Omega(k)$ states and $\Omega(k)$ transitions, otherwise, for accepting inputs of the form $ab^jaa$, with $|Q|\leq j< k$ the block of $b$'s would be read by some cycle labeled $b^\ell$. Together with the fact that $ab^{k-\ell}aa$ must be accepted, we get that the machine would also accept $ab^kaa$, a forbidden word. However, a simple automaton $M$ accepting $aab^*aa$ with $5$ states and $5$ transitions satisfies the conditions, $W\subset L(M)$ and $L(M)$ does not include any other word with a factor $ab^ka$.
\end{example}

As the number of states implicitly gives an upper bound on the possible number of transitions of an NFA, the hardness of Problem~\ref{problem:small-cover-set-avoiding-forbidden-patterns}(cover, states), which is the version of the problem where we are looking for the smallest cover automaton in terms of the number of its states rather than transitions, suggests that all versions of Problem~\ref{problem:small-cover-set-avoiding-forbidden-patterns} are NP-hard, as similar straightforward reductions are possible with simple target and forbidden sets. All proofs in this section can be found in Appendix~\ref{section:appendix-proofs-ssbnfas}.

\begin{proposition}\label{prop:hardness-small-cover-set-avoiding-forbidden-patterns-problem}
    Problem \ref{problem:small-cover-set-avoiding-forbidden-patterns}(cover, states) is NP-hard.
\end{proposition}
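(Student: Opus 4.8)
The plan is to give a polynomial-time reduction from the \emph{minimum biclique edge cover} problem (also called the bipartite dimension problem), a classical NP-complete problem: given a bipartite graph $H=(X\cup Y,R)$ with $X=\{1,\dots,p\}$ and $Y=\{1,\dots,q\}$ and an integer $t$, decide whether the edge set $R$ can be covered by $t$ complete bipartite subgraphs (bicliques) $A\times B\subseteq R$; write $\beta(H)$ for the least number of bicliques covering $R$. From such an instance I would build a target/forbidden pair as follows. Introduce one fresh letter $\rho_i$ for each $i\in X$ and one fresh letter $\gamma_j$ for each $j\in Y$, so that $\Sigma=\{\rho_1,\dots,\rho_p,\gamma_1,\dots,\gamma_q\}$. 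Put $W=\{\rho_i\gamma_j\mid (i,j)\in R\}$, so every target word has length $2$ and $n=2$, and set $F=\{\rho_i\mid i\in X\}\cup\{\gamma_j\mid j\in Y\}$. Each forbidden pattern is a single letter occurring as a factor (indeed as the first or second letter) of some word in $W$, provided $H$ has no isolated vertices, which we may assume without loss of generality; hence the instance is legal for Problem~\ref{problem:small-cover-set-avoiding-forbidden-patterns}(cover, states). Finally I would set the state budget to $t+c$ for a fixed constant $c$ determined below.

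The key effect of $F$ is that it neutralizes the cover-set freedom exactly where it matters: the only word of length at most $2$ containing no forbidden letter is $\varepsilon$, so every word in $(L\setminus W)\cap\Sigma^{\le 2}$ is forbidden, and in particular no non-edge word $\rho_i\gamma_j$ with $(i,j)\notin R$ may be accepted. This forces $L(M)\cap\Sigma^{\le2}$ to equal $W$ (possibly together with $\varepsilon$). For the easy direction, given a biclique cover $A_1\times B_1',\dots,A_t\times B_t'\subseteq R$, I would build an NFA with a start state $q_0$, an accepting state $q_f$, and one ``pivot'' state $m_r$ per biclique, with transitions $q_0\xrightarrow{\rho_i}m_r$ whenever $i\in A_r$ and $m_r\xrightarrow{\gamma_j}q_f$ whenever $j\in B_r'$; this accepts exactly $W$ among words of length $\le2$ using $t+2$ states. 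For the converse lower bound, from any NFA $M=(\Sigma,Q,q_0,\delta,F')$ satisfying the constraints I would associate to each state $s$ the rectangle $\mathrm{In}(s)\times\mathrm{Out}(s)$, where $\mathrm{In}(s)=\{i\mid s\in\delta(q_0,\rho_i)\}$ and $\mathrm{Out}(s)=\{j\mid \delta(s,\gamma_j)\cap F'\neq\emptyset\}$. Since $M$ (which has no $\varepsilon$-moves in this model) accepts no non-edge, each such rectangle is contained in $R$, and since $M$ accepts all of $W$ these rectangles cover $R$; thus the states of $M$ yield a biclique cover of size at most $|Q|$, so $\beta(H)\le |Q|$.

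The hard part will be to pin down the additive constant $c$ so that this is a genuine decision reduction rather than one with an $O(1)$ window: I must show the minimum number of states is \emph{exactly} $\beta(H)+c$. The forward construction gives the upper bound $\beta(H)+2$, while the rectangle argument only gives the lower bound $\beta(H)$, so the obstacle is to account precisely for the start/accept machinery and to rule out that a single state simultaneously serves as $q_0$, a pivot, and a final state (e.g.\ via a self-loop $q_0\in\delta(q_0,\rho_i)$) in a way that drops the count below $\beta(H)+2$. I would resolve this with a small fooling-set argument --- for instance by augmenting the instance with a constant number of gadget letters and edges that force a dedicated start state and a dedicated accepting state, each distinct from every pivot, thereby fixing $c$ independently of $H$ --- together with a routine check that allowing $\varepsilon$ and longer accepted words (which the cover relaxation permits) cannot lower the state count needed for the length-$2$ behaviour. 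With $c$ fixed, a valid NFA with at most $t+c$ states exists if and only if $\beta(H)\le t$, establishing NP-hardness of Problem~\ref{problem:small-cover-set-avoiding-forbidden-patterns}(cover, states).
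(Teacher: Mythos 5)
Your route is genuinely different from the paper's. The paper observes that taking $F=\Sigma$ makes every nonempty word of length at most $n$ forbidden, so Problem~\ref{problem:small-cover-set-avoiding-forbidden-patterns}(cover, states) with $W=L$ and $F=\Sigma$ \emph{is} the minimization problem for nondeterministic finite cover automata, which is NP-hard by Campeanu's result; the whole proof is a two-way equivalence check. You instead reduce directly from biclique edge cover, which is closer in spirit to the paper's later proof of Proposition~\ref{lemma:ssb-nfa-min-hard} for SSB-NFAs. Your core correspondence (states $s$ with nonempty $\mathrm{In}(s)\times\mathrm{Out}(s)$ give bicliques contained in $R$ that cover $R$, because any accepted non-edge $\rho_i\gamma_j$ or any accepted single letter would be a forbidden extra word of length $\le n=2$) is sound.

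The gap is exactly where you flag it: you never fix the constant $c$, and the fix you sketch --- gadgets forcing a dedicated start state and a dedicated accepting state, aiming at $c=2$ --- does not work in the cover setting. Since words of length greater than $n=2$ and the word $\varepsilon$ are unconstrained, nothing can prevent a solver from identifying $q_0$ with the accepting state: the machine with $q_0$ final, transitions $q_0\xrightarrow{\rho_i}m_r$ for $i\in A_r$ and $m_r\xrightarrow{\gamma_j}q_0$ for $j\in B_r'$ satisfies all constraints with only $\beta(H)+1$ states (the spurious words it accepts have length $0$ or length $\ge 4$, hence lie outside $\Sigma^{\le 2}$ or avoid $F$). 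So the minimum is not $\beta(H)+2$, and a gadget that tried to force it to be would have to fight the cover relaxation itself. The correct and simpler closure of your argument is to prove the minimum equals $\beta(H)+1$: your own observations already give $\mathrm{Out}(q_0)=\emptyset$ (else some $\gamma_j\in L(M)\cap\Sigma^{\le 2}$ is a forbidden extra word) and $\mathrm{In}(f)=\emptyset$ for every final $f$ (else some $\rho_i$ is accepted), so $q_0$ contributes an empty rectangle and is distinct from every pivot, giving $|Q|\ge\beta(H)+1$; the machine above matches this. With the budget set to $t+1$ the reduction is a correct decision reduction. As written, however, the proof is incomplete, and the proposed repair would need to be replaced by this tight-constant argument.
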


\subsection{Practically Motivated Restricted NFA-models}

As it is practically impossible to obtain RNA sequences of infinite size from circular DNA transcription, we propose the following NFA-like models which impose certain restrictions on valid computations. Encoding finite languages using those variants can reduce the size of the resulting machine w.r.t. to the minimal classical NFA encoding. Initially we also hoped that those restricted models would allow for more tractable minimization algorithms. In what follows, however, we show the NP-hardness for the minimization problem for all models we considered. The first restriction comes from the fact that the number of times the template word is `read' in circular transcription might be limited. There are various ways of expressing this limitation in terms of the computation of the automata. First, one can restrict the number of times that a certain state of the automaton can be reached in a computation.

\begin{definition}
    A \emph{state-step-bounded nondeterministic finite automaton} (SSB-NFA) $A$ is a hexatuple $A = (Q,c,\Sigma,\delta,q_0,F)$
    where $Q$ is a \emph{finite set of states}, $c\in\N$ denotes the \emph{state-step-bound},
    $\Sigma$ denotes a \emph{finite alphabet}, $\delta : Q\times\Sigma \rightarrow Q$ denotes the transition function,
    $q_0\in Q$ the initial state, and $F\subseteq Q$ the set of final states.

    The language $L(A)$ of a SSB-NFA is the language of all words $w\in\Sigma^*$ where $\delta(q_0,w)\cap F\neq \emptyset$
    and $w$ has some accepting computation where each state $q_i$ occurs at most $c$ times.
\end{definition}

Next, in the encoding of NFAs onto templates for hairpin deletion proposed in Theorem~\ref{theorem:dfa-simulation}, different states are encoded consecutively. So, simulating a transition from an earlier encoded state to a later encoded state still occurs on a single repetition of the template. Hence, we could also impose an order $<_Q$ on the states and restrict the number of times that $q_i$ follows $q_j$ in a computation if $q_i <_Q q_j$, effectively resulting in having to use another repetition in $w^\omega$. This results in the notion of \emph{return-bounded nondeterministic finite automata} (RB-NFAs). Their formal definition is given in Appendix~\ref{section:appendix-proofs-ssbnfas}. Finally, another practical limitation might impose a bound on the length of the intron that is removed during hairpin deletion. This can be represented by imposing an order $<_Q$ on $Q$, setting a distance for each $2$ subsequent elements over $<_Q$, and setting a bound on the maximum forward distance between two subsequent states in a computation over the automaton. This results in the notion of \emph{distance-bounded nondeterministic finite automata} (DB-NFAs). Their formal definition is given in Appendix~\ref{section:appendix-proofs-ssbnfas} as well. The class of all SSB-NFAs, RB-NFAs, and DB-NFAs is denoted by $C_{FA}$. We will investigate SSB-NFAs as an exemplary case and see that we can obtain NP-hardness results for the decision variant of the minimization problem for all models in the class $C_{FA}$ using very similar proofs.

\subsection{Properties of State-Step-Bounded NFAs}

SSB-NFAs restrict the number of times we are allowed to be in a specific state. Hence, for SSB-NFAs (and RB-NFAs) $A$, we know that $L(A)$ is a finite language. One big advantage of SSB-NFAs is their potential to be significantly smaller than NFAs recognizing the same language. When encoding a finite language as NFA, each word in the language must occur as the label of a simple path in the state graph as any loop in the NFA with a path to a final state results in an infinite language. In SSB-NFAs, certain repetitions may be represented with a small loop. Consider the following example. 

\begin{example}\label{example:ssb-nfa-space-savings}
    Let $L = \{\ta,\ta\tb\ta,\ta\tb\ta\tb\ta,\ta\tb\ta\tb\ta\tb\ta,\ta\tb\ta\tb\ta\tb\ta\tb\ta\}$. Let $A = (Q,\Sigma,\delta,q_0,F)$ be
    a minimal NFA represented by the left state diagram in Figure \ref{fig:ssb-nfa-space-saving} such that $L = L(A)$. We see that there exists a (minimal) SSB-NFA $B = (Q',5,\Sigma,\delta',q_0',F')$ represented by the state diagram on the right in Figure \ref{fig:ssb-nfa-space-saving} for which we also have $L(B) = L = L(A)$, but its size is significantly smaller than $A$ regarding the number of transitions and states.
    \begin{figure}[h]
    \centering
    \includegraphics[width=11cm]{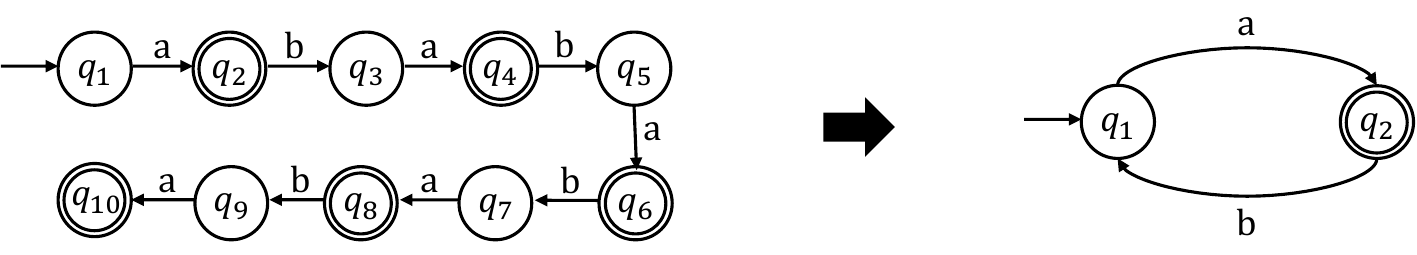} 
    \caption{One minimal NFA $A = (Q,\Sigma,\delta,q_0,F)$ (left) and  minimal SSB-NFA $B = (Q',5,\Sigma,\delta',q_0',F')$ (right) recognizing the same language $L$ found in example \ref{example:ssb-nfa-space-savings}.}
    \label{fig:ssb-nfa-space-saving}
    \end{figure}
\end{example}

Indeed, we can find for any NFA $A$ accepting a finite language a SBB-NFA $B$ which is smaller or equal to the size of $A$ in terms of the number of states or transitions but which accepts the same language. Trivially, as a NFA recognizing a finite language has no loops on paths that reach final states, we can essentially copy the whole automaton and set the step-bound $c$ to any value to obtain a SSB-NFA which recognizes the same langauge. As we are interested in finding minimal templates for contextual splicing, we propose the following decision variant of the SSB-NFA minimisation problem.

\begin{problem}[SSB-NFA-Min]\label{problem:ssb-nfa-min}
    Let $L$ be some regular language and let $c,k\in N$ be some positive integers.
    Does there exist some SSB-NFA $A = (Q,c,\Sigma,\delta,q_0,F)$ such that $L(A) = L$ with either $|Q| \leq k$ (SSB-NFA-Min-States)
    or $|\delta| \leq k$ (SSB-NFA-Min-Transitions).
\end{problem}

We know that the answer to both problems is $\mathtt{false}$ if the input language $L$ is an infinite language. This can be efficiently checked for any representation, i.e., DFAs, NFAs, or regular expressions. Due to the sizes of different representations of regular languages, we obtain the following results.

\begin{proposition}\label{lemma:ssb-nfa-min-in-np}
    $\mathtt{SSB\text{-}NFA\text{-}Min}$ is in NP if the input language $L$ is given as a finite list of words. If $L$ is presented as a regular expression or NFA, then the problem is in PSPACE.
\end{proposition}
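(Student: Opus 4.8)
The plan is to use the standard guess-and-verify paradigm, with the real difficulty concentrated entirely in how one certifies the \emph{equality} $L(A)=L$ rather than mere membership. Both versions of the problem (states and transitions) are handled uniformly: a candidate automaton is guessed subject to $|Q|\le k$ or $|\delta|\le k$, and everything below is insensitive to which budget is imposed. First comes a cheap preprocessing step: since every SSB-NFA recognizes a finite language, we reject immediately if $L$ is infinite; this is decidable by a reachable/co-reachable cycle test, in polynomial time for an NFA and within the budget for a regular expression (convert, then test), and it is vacuous for an explicit word list. Next we bound the object we must guess. Because a trie for a finite $L$ already yields an SSB-NFA recognizing exactly $L$ (the step bound $c$ is irrelevant, as there are no loops), the answer is trivially \texttt{true} whenever $k$ reaches the trie size, so we may restrict the search to automata with at most $\min(k,\text{triesize}(L))$ states. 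For an explicit word list this bound is polynomial in the input (the trie has at most $1+\sum_i|w_i|$ nodes), so the guessed $A$ has polynomial size; for a regular expression or an NFA the language can contain exponentially many words, so the relevant automata — and the budget $k$ itself — may be exponential, which is exactly what forces the weaker, polynomial-space guarantee in that case.

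For the verification of $L(A)=L$ we treat the two inclusions separately. The inclusion $L\subseteq L(A)$ is the easy, existential direction: for each $w_i$ we additionally guess a step-bounded accepting run, i.e.\ a state sequence of length $|w_i|+1$, and check in polynomial time that it spells $w_i$, ends in $F$, and visits every state at most $c$ times; for a word list these are polynomially many runs of polynomial length and fold straight into the certificate. The inclusion $L(A)\subseteq L$ is the hard, universal direction: we must certify that \emph{no} step-bounded accepting run of $A$ carries a label outside $L$. The natural device is the product of $A$ with a small deterministic automaton $D$ for $\overline L$ (built from the trie of $L$ together with a dead sink), in which a step-bounded accepting run of $A$ whose label lies in $\overline L$ becomes a suitably accepting path. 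Two kinds of violating words must be excluded: those of length $\le n$ lying outside $L$, and those of length $>n$ (any such word is automatically forbidden, since $L\subseteq\Sigma^{\le n}$). For the latter a cycle-removal argument caps the lengths we must inspect: any step-bounded accepting run longer than $n+|Q|$ contains a repeated state, and deleting the resulting simple cycle cannot increase any visit count while keeping the length above $n$, so it suffices to rule out accepted words of every length in $\{n+1,\dots,n+|Q|\}$, a polynomially long range.

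The genuine obstacle — and the reason the two encodings fall into different classes — is the bookkeeping of the step bound inside this product. Enforcing that each state of $A$ is used at most $c$ times requires, a priori, remembering a vector of per-state visit counts along a path, and the space of such vectors is exponential in $|Q|$. For an explicit word list the mitigating facts are that every run we must inspect has polynomially bounded length (at most $n$ on the short side, at most $n+|Q|$ on the long side), so each individual count stays polynomial; the heart of the NP argument is to exploit this and decide the upper inclusion while tracking only polynomially much information, rather than materializing the full, exponentially large space of count vectors. For a regular expression or NFA the language, the candidate automata, and the parameter $n$ may all be exponential, so the configuration graph is genuinely exponential; since each configuration is nevertheless describable in polynomial space and only reachability is needed, the equality test can be carried out on the fly in polynomial space, and the guess of the candidate automaton is absorbed by $\mathrm{NPSPACE}=\mathrm{PSPACE}$. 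I expect this count-bookkeeping — showing that, under the word-list encoding, the step bound can be verified without an exponential blow-up in the visit-count configurations — to be the main technical point of the proof.
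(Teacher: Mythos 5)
Your overall strategy coincides with the paper's: reject infinite $L$; observe that the trie of $L$ is itself an SSB-NFA, so the answer is trivially \texttt{true} once $k$ reaches the trie size and any candidate worth guessing has polynomial size; certify $L\subseteq L(A)$ by guessing step-bounded accepting runs (arguably cleaner than the paper, which just asserts membership can be checked); and reduce $L(A)\subseteq L$ to a bounded-length search. Your product with a complement DFA built from the trie plus a dead sink is essentially the same device as the paper's level-by-level comparison of the prefix tree of $L(A)$ against $T_L$, and your cycle-removal bound $n+|Q|$ on the lengths to inspect is a tidier version of the paper's stopping criterion. The problem is that you stop exactly where the proof has to do work: your final paragraph identifies the verification of $L(A)\subseteq L$ under the step bound as ``the main technical point'' and then only states that you \emph{expect} it to be resolvable without materializing the exponentially many visit-count vectors. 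That is a statement of the obstacle, not a proof; as written, no polynomial-time verifier is exhibited, so the NP upper bound is not established.

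Concretely, the missing piece is the inner bookkeeping: deciding whether a prefix $u$ is readable by $A$ along a run that still respects the bound $c$ (and extends to a step-bounded accepting run) naively requires the set of reachable pairs of the form (state, visit-count vector), and the number of such vectors is exponential in $|Q|$. The paper's proof supplies the \emph{outer} control --- it prunes states that cannot reach a final state, then grows the prefix tree of $L(A)$ level by level and aborts the moment it is not contained in $T_L$, so the frontier never exceeds $|T_L|$ and only polynomially many prefixes are ever examined --- but the per-prefix step-bound bookkeeping still has to be argued (for instance by over-approximating $L(A)$ with the unrestricted NFA language of the trimmed automaton for the containment direction, or by showing the count information along runs of length at most $n+|Q|$ can be summarized in polynomial space). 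Until such an argument is supplied, the verification step, and with it NP membership, remains open in your write-up. The same caveat touches your PSPACE paragraph: ``each configuration is describable in polynomial space'' is true of a single (state, count-vector) pair but not of the \emph{subsets} of such pairs that an on-the-fly equivalence test must carry, so the appeal to NPSPACE $=$ PSPACE needs the same missing ingredient rather than the standard subset-tracking argument.
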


Similar to the decidability version of the problem of finding minimal NFAs for finite languages (see \cite{DBLP:conf/lata/GruberH07,Amilhastre2001}), we can show that $\mathtt{SSB\text{-}NFA\text{-}Min}$ is NP-hard by a reduction from the biclique covering problem.

\begin{proposition}\label{lemma:ssb-nfa-min-hard}
    $\mathtt{SSB\text{-}NFA\text{-}Min}$ (states/transitions) is NP-hard for all alphabets $\Sigma$ with $|\Sigma| \geq 2$.
\end{proposition}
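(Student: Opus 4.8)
The plan is to reduce from the \emph{biclique cover problem}---covering the edges of a bipartite graph $G=(X,Y,E)$ by a minimum number of complete bipartite subgraphs (bicliques)---which is NP-complete and is the standard source of hardness for NFA minimization over finite languages~\cite{DBLP:conf/lata/GruberH07,Amilhastre2001}. I would reuse the known encoding of $G$ as a finite binary language and contribute a single new ingredient tailored to SSB-NFAs: choosing the state-step-bound $c$ large enough that it is \emph{vacuous} on every run that certifies membership, so that the classical NFA lower and upper bounds transfer verbatim.

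Concretely, fix injective fixed-length binary encodings $x_i\mapsto\hat x_i$ and $y_j\mapsto\hat y_j$ over $\{\ta,\tb\}\subseteq\Sigma$ and set $L_G=\{\hat x_i\hat y_j \mid (x_i,y_j)\in E\}$, a finite language all of whose words have the same length $N$. The heart of the correspondence is the \emph{crossing argument} at the midpoint: for any machine accepting $L_G$ and any state $q$ reachable from $q_0$ after reading a left half, let $I_q=\{i\mid \hat x_i \text{ leads } q_0 \text{ to } q\}$ and $J_q=\{j\mid \hat y_j \text{ leads } q \text{ to a final state}\}$. Soundness forces $I_q\times J_q\subseteq E$, while membership of every $\hat x_i\hat y_j\in L_G$ forces $E=\bigcup_q I_q\times J_q$; hence the midpoint states yield a biclique cover and $|M|\ge\beta(G)$, the biclique cover number. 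For the converse, a cover of size $t$ is realized by an acyclic machine that reads the left halves, branches into one intermediate state per biclique, and reads the right halves; following the established reductions, the total number of states (resp. transitions) is a fixed function of $t$ and $G$, yielding the threshold $k$. I would import this bookkeeping from the cited reductions rather than redo it, so that $G$ has a biclique cover of size $\le t$ iff $L_G$ admits such a machine with at most $k$ states (resp. transitions).

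The main obstacle is specific to SSB-NFAs: a priori the step-bound could let a machine with loops recognize the finite language $L_G$ using \emph{fewer} resources than any loop-free NFA (exactly the phenomenon of Example~\ref{example:ssb-nfa-space-savings}), which would break the lower bound. I would neutralize this by setting $c:=N+1$. Since every word of $L_G$ has length $N$, every accepting computation consists of at most $N+1\le c$ state occurrences, so the step-bound is never active on it. Two consequences make the reduction go through: first, in the crossing argument the run spliced from a $q_0$-to-$q$ prefix run and a $q$-to-final suffix run again has length $N$, hence is automatically a valid step-bounded accepting run, so soundness---and therefore the bound $|M|\ge\beta(G)$---holds for \emph{every} SSB-NFA with $L(A)=L_G$, regardless of any loops it may contain; second, an acyclic NFA realizing a cover is, with this $c$, an SSB-NFA recognizing exactly $L_G$ of the same size. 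Verifying that $c>N$ genuinely renders the bound vacuous on all certifying runs---while no longer word can be smuggled into $L(A)$, which is automatic since we only quantify over machines with $L(A)=L_G$---is the crux.

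Combining the two directions establishes NP-hardness of both the states and the transitions variants. The construction uses only the two letters $\ta,\tb$, and for any $\Sigma$ with $|\Sigma|\ge2$ the remaining letters carry no transitions and force no additional states, so the hardness holds for every alphabet of size at least two.
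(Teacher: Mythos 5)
Your reduction source (biclique cover) and language encoding are the same as the paper's, but you handle the model-specific parameter in the opposite way: the paper sets the step-bound to $c=1$, so that every valid accepting computation is a simple path and all the distinctness bookkeeping (the chain states reading the two halves, the initial and final states, and the midpoint states are pairwise distinct) follows immediately from ``no state may be visited twice.'' You instead set $c=N+1$ so that the bound is \emph{vacuous} on length-$N$ runs, and claim the classical NFA lower and upper bounds ``transfer verbatim.'' That claim is where the gap is. The classical lower-bound bookkeeping for finite languages rests on the useful part of the automaton being acyclic, which for an NFA follows from finiteness of the language --- but for an SSB-NFA it does not: Example~\ref{example:ssb-nfa-space-savings} exists precisely because a step-bounded machine can recognize a finite language with a cycle. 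Your crossing argument is indeed robust to loops, but it only yields $\beta(G)\le(\text{number of midpoint states})\le|Q|$, whereas the reduction's threshold is $k=t+(\text{overhead for reading the two halves})$; from $|Q|\le k$ you can only conclude $\beta(G)\le t+\text{overhead}$ unless you also show that the midpoint states are disjoint from an overhead's worth of other necessary states, and that a loopy machine cannot read the halves with fewer states. Neither of these is supplied, and neither is ``verbatim'' from the NFA case.

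The gap is fillable, but it requires an argument you do not make: with $c=N+1\ge 2$ and $L(A)=L_G$ consisting only of words of length exactly $N$, no useful state can lie on a cycle. Concretely, take a valid accepting run $\rho$ (with $N+1$ state occurrences) through a state $q$ on a simple cycle of length $p$; inserting one traversal of the cycle increases each cycle state's occurrence count by exactly $1$, and since no state can occur $N+1$ times in $\rho$ (outside a degenerate all-one-state run, which forces $\varepsilon\in L(A)$), the pumped run still respects the bound $c=N+1$ and accepts a word of length $N+p\notin\{N\}$, a contradiction. Only after this acyclicity lemma does the useful part become an ordinary acyclic NFA for $L_G$, at which point importing the cited bookkeeping is legitimate. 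As it stands, ``the step-bound is never active on certifying runs'' addresses soundness of the crossing argument but not the possibility that the step-bound is doing real work in \emph{excluding} longer words, which is exactly the mechanism by which an SSB-NFA could undercut the NFA lower bound. The paper's choice $c=1$ sidesteps all of this at the cost of a slightly more delicate argument that the midpoint states differ from the chain states; your choice $c=N+1$ is workable and arguably more natural, but the acyclicity step is the actual content and must be written out.
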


By that, we can conclude the results of this section in the following theorem.

\begin{theorem}
    The $\mathtt{SSB\text{-}NFA\text{-}Min}$ problem is NP-complete for all alphabets $|\Sigma| \geq 2$ if the input language is given as a list of words.
\end{theorem}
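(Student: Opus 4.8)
The plan is to establish NP-completeness for $\mathtt{SSB\text{-}NFA\text{-}Min}$ (in the list-of-words input setting) by combining the two propositions that immediately precede the theorem. The statement is essentially a corollary, so the work is to verify that the two halves fit together for the same input representation. First I would invoke Proposition~\ref{lemma:ssb-nfa-min-in-np} for the upper bound: when the input language $L$ is given as a finite list of words, $\mathtt{SSB\text{-}NFA\text{-}Min}$ is in $\mathsf{NP}$. This already matches the representation assumed in the theorem, so no adaptation is needed; the only point to note is that the nondeterministic guess (a candidate SSB-NFA with at most $k$ states or $k$ transitions, together with the bound $c$) can be verified in polynomial time against the explicit word list, since one checks membership and the step-bound condition for each of the finitely many listed words, and checks that no word outside the list is accepted.

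Second I would invoke Proposition~\ref{lemma:ssb-nfa-min-hard} for the lower bound: $\mathtt{SSB\text{-}NFA\text{-}Min}$ (states/transitions) is $\mathsf{NP}$-hard for every alphabet with $|\Sigma|\ge 2$. The one subtlety to address explicitly is representational consistency. Hardness is stated for the problem in general, but the reduction from biclique cover produces a \emph{finite} language, and the theorem concerns exactly the finite-list input encoding. I would therefore point out that the instance produced by that reduction can be written down as an explicit list of words in time polynomial in the size of the biclique-cover instance, so the hardness transfers verbatim to the list-of-words input model. Combining membership in $\mathsf{NP}$ with $\mathsf{NP}$-hardness under this common representation yields $\mathsf{NP}$-completeness for both the states and the transitions variants.

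Since every ingredient is already available, the proof is short: state that containment in $\mathsf{NP}$ comes from Proposition~\ref{lemma:ssb-nfa-min-in-np} and $\mathsf{NP}$-hardness from Proposition~\ref{lemma:ssb-nfa-min-hard}, both under the finite-list representation, and conclude $\mathsf{NP}$-completeness by definition. The main (and really only) obstacle is the bookkeeping around input representation: one must make sure the $\mathsf{NP}$ membership argument and the hardness reduction are genuinely speaking about the same encoding of $L$. The subtlety is that membership is only claimed to be in $\mathsf{NP}$ (rather than $\mathsf{PSPACE}$) precisely when $L$ is a finite word list, so the theorem is deliberately scoped to that case; I would make sure the concluding sentence emphasizes that the finite-list restriction is exactly what pins the complexity down to $\mathsf{NP}$-complete rather than leaving it in $\mathsf{PSPACE}$.
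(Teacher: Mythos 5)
Your proposal is correct and matches the paper's approach exactly: the theorem is stated as an immediate consequence of Proposition~\ref{lemma:ssb-nfa-min-in-np} (NP membership for the list-of-words encoding) and Proposition~\ref{lemma:ssb-nfa-min-hard} (NP-hardness via the biclique-cover reduction, which indeed produces an explicitly listable finite language). Your extra remark on checking that both halves use the same input representation is a sensible bit of bookkeeping that the paper leaves implicit.
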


For the minimization problems of RB-NFAs and DB-NFAs, almost the exact same reduction as for SSB-NFAs can also be applied to obtain NP-hardness in these models. 
A sketch of the differences is also given in Appendix~\ref{section:appendix-proofs-ssbnfas}.
For the following result, assume that $\mathtt{RB\text{-}NFA\text{-}Min}$ and $\mathtt{DB\text{-}NFA\text{-}Min}$ are defined analogously to $\mathtt{SSB\text{-}NFA\text{-}Min}$.

\begin{proposition}\label{proposition:rb-nfa-min-np-hard}
    $\mathtt{RB\text{-}NFA\text{-}Min}$ and $\mathtt{DB\text{-}NFA\text{-}Min}$ are NP-hard for all alphabets $\Sigma$ with $|\Sigma| \geq 2$.
\end{proposition}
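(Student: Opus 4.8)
The plan is to reuse the reduction from the biclique covering problem developed for Proposition~\ref{lemma:ssb-nfa-min-hard}, keeping the same target language and only re-verifying the model-specific side conditions. Recall that, from a biclique covering instance (a bipartite graph $G$ together with a bound $t$ on the number of bicliques), that reduction produces a finite language $L=L_G$ whose words encode the edges of $G$ in a layered fashion: each accepting computation reads a code of a left vertex, passes through a single intermediate state, and then reads a code of a right vertex into a final state, so that intermediate states correspond exactly to bicliques and a minimal automaton corresponds to a minimal biclique cover. The first observation I would record is that, because every word of $L_G$ has this bounded, layer-monotone shape, each accepting computation visits every state only a constant number of times and only moves ``forward'' through the layers.

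\textbf{Upper bound.} Given a biclique cover of $G$ of size $t$, I would build the same three-layer automaton used in the SSB case and equip it with the extra data each model requires. For $\mathtt{RB\text{-}NFA\text{-}Min}$ I would take $<_Q$ to order the initial state before the intermediate states and those before the final states; since in this layered automaton no lower-ordered state ever follows a higher-ordered one along an accepting computation, the return condition holds vacuously for any return-bound, and $L_G$ is recognized with the same number of states and transitions as in the SSB construction. For $\mathtt{DB\text{-}NFA\text{-}Min}$ I would additionally assign distances to consecutive states in $<_Q$ and pick the distance-bound large enough that every forward step taken inside an accepting computation stays within it, which is immediate for a layered machine on words of bounded length. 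Thus a cover of size $t$ yields an RB-/DB-NFA of the prescribed size.

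\textbf{Lower bound.} Conversely, suppose a small RB-NFA (respectively DB-NFA) recognizes $L_G$. Exactly as for plain NFAs, I would read a biclique off each intermediate state $q$, namely the product of the set of left vertices whose code can drive the machine into $q$ with the set of right vertices whose code can be read from $q$ into a final state. The only point beyond the NFA argument is to check that every such \emph{crossed} path is itself a \emph{valid} computation of the restricted model, so that the recovered products are genuine edges of $L_G$ and the bicliques are legitimate. Here the layered shape of $L_G$ is essential: a crossed path has the very same layer profile as an honest accepting computation, visits each state the same number of times, and never introduces a backward step or an out-of-range forward distance; hence the return/distance restriction never filters it out. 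The correspondence between intermediate states and bicliques is therefore preserved, and a machine with $k$ states (or transitions) yields a biclique cover of size at most $k$, up to the additive constants already present in the SSB analysis.

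\textbf{Main obstacle.} The crux, and the only place where the three models genuinely differ, is the verification just described: one must confirm that the model-specific acceptance restriction is slack on \emph{all} crossed computations, not merely on the intended ones. I expect this to be routine precisely because $L_G$ can be taken with uniform, layer-monotone words, so that both the return-bound and the distance-bound are never tight; the remaining work is bookkeeping to fix the orders $<_Q$, the per-pair distances, and the two bound parameters for each model. I would present exactly this bookkeeping as the ``sketch of the differences'' relative to the SSB reduction, and conclude NP-hardness of both $\mathtt{RB\text{-}NFA\text{-}Min}$ and $\mathtt{DB\text{-}NFA\text{-}Min}$ for every alphabet with $|\Sigma|\ge 2$.
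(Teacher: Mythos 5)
Your proposal is correct and follows essentially the same route as the paper: both reuse the biclique-cover reduction from the SSB case verbatim, choose a layer-monotone order $<_Q$ (and, for DB-NFAs, unit distances with a sufficiently large bound) so that the return/distance restrictions are slack on the forward direction, and argue the converse by the same intermediate-state-to-biclique extraction. If anything, you are slightly more explicit than the paper's sketch in flagging that the crossed computations in the lower bound must be re-verified against the model-specific restriction, which is exactly the bookkeeping the paper's appendix leaves implicit.
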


This concludes the results regarding the automata models in $C_{FA}$.
As even in these restricted cases we still have NP-hardness for minimization, this motivates identifying target languages with restricted structure for which efficient minimization algorithms may exist.

\section{Conclusion}
In this paper, we provided a framework which shows that any regular language can be obtained from circular words using maximally bounded parallel logarithmic hairpin deletion. 
This indicates that co-transcriptional splicing may be utilized to encode and obtain any set of RNA sequences representable by regular languages, even potentially infinite ones, from a circular DNA template of finite size. As the construction is based on the NFA representation of the encoded regular language, the problem of minimizing NFAs has been further investigated, in particular for well motivated restricted NFA-like models (SSB-NFAs, RB-NFAs, and DB-NFAs). 
As only hardness results could be obtained so far, future research could work on identifying practically motivated classes of languages for which we can efficiently find small NFAs, SSB-NFAs, RB-NFAs, or DB-NFAs. In addition to that, another future challenge lies in applying hairpin deletion to obtain language classes with higher expressibility, e.g. context-free or context-sensitive languages. But whether this is possible is left as an open question for which we have no specific conjecture so far.

\begin{question}
    Let $L_G$ be some context-free language. Does there exist a word $w\in\Sigma^*$ and a properties tuple for log-hairpin deletion $S$, such that $L(A) = [w^\omega]_{\hpdblogpmax_S}$?
\end{question}

\bibliographystyle{plain}
\bibliography{programmable_cotranscriptional_splicing}

\newpage
\appendix

\section{Proof of Correctness for the Construction in Theorem~\ref{theorem:dfa-simulation}}
\label{section:appendix-simulation-proof}
In this section, a formal correctness proof for the construction given regarding Theorem~\ref{theorem:dfa-simulation} is provided. First, we show that $L(A)\subseteq[w^\omega]_{\hpdblogpmax_{S'}}$ given a constructive argument. Then, we show that $[w^\omega]_{\hpdblogpmax_{S'}}\subseteq L(A)$ by an inductive argument of the hairpin deletion steps that have to be taken in order for parallel hairpin deletion to terminate.

\textbf{Formal proof of first direction ($L(A)\subseteq[w^\omega]_{\hpdblogpmax_{S'}}$):} Assume $u\in L(A)$. Assume $q_{i_1}q_{i_2}...q_{i_{|u|+1}}$ to be a sequence of states to obtain $u$ in $L(A)$, i.e., such that $q_{i_1} = q_1$, $q_{|u|+1}\in F$, and for all $i_j,i_{j+1}$, $j\in[|u|]$, we have $q_{i_{j+1}}\in\delta(q_{i_j},u[j])$. Assume $k_j$ to mark the number of the transition taken to obtain $q_{i_{j+1}}$ from $q_{i_j}$ with $u[j]$. By the construction above, we know that we can factorize $w^\omega$ into
$$ v_1\ u[1]\ v_{2,\ell}\ v_{2,r}\ u[2]\ ...\ u[|u|-1]\ v_{|u|,\ell}\ v_{|u|,r}\ u[|u|]\ v_{|u|+1,\ell}v_{|u|+1,r}\ f_s\ v_{|u|+2}\ f_e\ t\ yw^\omega $$ such that $t\in T$, $y = \theta(S_{q_1})E_{q_1}$, and
\begin{align*}
    v_1             &= S_{1,1}...S_{1,\ell_1}\ ...\ \theta(S_{1,1}...S_{1,k_1})E_{1,\ell_1}\text{, } \\
    v_{j,\ell}      &= S_{q_{i_{j+1}}}...\theta(S_{q_{i_{j+1}}})E_{q_{i_{j+1}}}\text{, for $j\in[|u|+1]\setminus\{1\}$, } \\
    v_{j,r}         &= S_{j+1,1}...S_{j+1,k_{j+1}}\ ...\ \theta(S_{j+1,1}...S_{j+1,k_{j+1}})E_{k_{j+1}}\text{, for $j\in[|u|]\setminus\{1\}$, } \\
    v_{|u|+1,r}     &= S_{i_{|u|+1},1}...S_{i_{|u|+1},k_{|u|+1}}\ ...\ \theta{S_{i_{|u|+1},1}...S_{i_{|u|+1},k_{|u|+1}}}E_{|u+1|}\text{, and} \\
    v_{|u|+2}       &= S_{end}\ ...\ \theta(S_{end})E_{end}.
\end{align*}
Each word $v$ marks a single factor completely removed by a single hairpin-removal step. As each right context appears uniquely in $w$, we assume that the first following occurrence in $w^\omega$ is taken. As we always pick a left context right after a removed factor and as each $u[i]$ cannot be part of a left context by assumption in the construction, we know that this factorization is valid for maximal parallel bounded hairpin deletion. Intuitively, $v_1$ is a factor removed by hairpin deletion that determines the first transition that is taken. For each $j\in[|u|+1]\setminus\{1\}$, the factor $v_{j,\ell}$ is a factor that can be completely removed by haipin-deletion that brings us to the beginning of some factor $s_{i}$, $i\in[|Q|]$, from which point on we can now select the next transition. The factor $v_{j,r}$, for $j\in[|u|]\setminus\{1\}$, similar to $v_1$, again is a factor removed by hairpin deletion that determines the next transition that is taken. Finally, we continue with two last removed hairpins that result in transcription termination. As $q_{i_{|u|+1}}$ is a final state, the factor $v_{|u|+1,r}$ actually exists and can be removed by hairpin deletion. It is followed by $f_s$. Then, in a last step, we can remove $v_{|u|,r}$ to obtain a suffix $f_sf_e$ in the transcribed word that is followed by $tyw^\omega$, resulting in termination, i.e., resulting in $u\in[w^\omega]_{\hpdblogpmax_{S'}}$.

\textbf{Formal proof second direction ($[w^\omega]_{\hpdblogpmax_{S'}}\subseteq L(A)$):} Next, we need to show that we cannot obtain any word that is not in $L(A)$. Let $u\in[w^\omega]_{\hpdblogpmax_{S'}}$. By terminating hairpin deletion, we know that there exists a prefix $w_pt$ of $w^\omega$, for $t\in T$ and $w_p\in\Sigma^+$, such that $w_p \hpdbe us$, for some suffix $s\in\Sigma^m$ of length $m$ that forms a hairpin. For $w_p$, we know by the definition of parallel bounded hairpin deletion that there exists some $n\in\N$ such that $$w_p = u_1\ v_1\ ...\ u_n\ v_n\ u_{n+1}$$ with $u_i\in\Sigma^*$, for all $i\in[n+1]$, and $v_j\in\Sigma^+$, for all $j\in[n]$, for which we have $v_j\hpdbe\varepsilon$, and such that $us = u_1...u_{n+1}$. By the definition of maximal parallel bounded hairpin deletion, we know that for all $(x,y)\in C$ we have $x\notin\Fact(u_i)$, for all $i\in[n+1]$. For all $j\in[n]$, we know that $v_j$ starts with some left context $x$ for some $(x,y)\in C$.

First, we show inductively that we can only obtain letters that can be read during a computation of $A$. For that, we show that $u_1$ must be empty, that $v_1$ contains only a single deleted hairpin, that $u_2$ is just a single letter that represents a label of some outgoing transition of $q_1$ and that $v_2$ consists of exactly two subsequently removed hairpins (for simplicity reasons, in this proof, we assume that a single $v_i$ may contain multiple subsequently removed hairpins, also resulting in the empty word. Following the formal model, between each subsequently removed hairpin, a factor $u$ would have to be added, which is set to the empty word. In this proof, we always assume that such a factorization is possible if we talk about subsequently removed hairpins in a single factor $v$.) By induction, and using the same arguments used for the basic step, we obtain that all $u_i$, for $i\in[n-1]\setminus{1}$ must be single letters representing labels from transitions in $A$, that all $v_i$, with $i\in[n-1]$ being odd, contain just a single removed hairpin, and that all $v_j$, with $j\in[n-1]$ being even, contain exactly two subsequently removed hairpins (hence, could be split up to a factor containing two distinct hairpins $v_{j_1}$ and $v_{j_2}$ that surround the empty word in the above factorization). After that, we proof the terminating condition, resulting in $u$ being actually in $L(A)$.
    
Suppose $u_1\neq\varepsilon$. Then $v_1$ does not start at $w[1]$. But then, the next possible starting position for $v_1$ is the next occurring left context $x$ of some $(x,y)\in C$ that does not start in $w[1]$. By construction, this is the left context $S_{q_k}$ for some state $q_k\in Q$ that is located after the first encoded transition. However, if $v_1$ starts with that or a later occurring left context, then, e.g., $S_{1,1}\in\Fact(u_1)$, which is a contradiction as $(s_{1,1},E_{1,1})\in C$. So, $u_1=\varepsilon$ and $v_1$ has a prefix $S_{1,k}$ for some $\nth{k}$ transition of state $q_1$.

By construction, we know that there exists a unique right context $E_{1,k}$ such that $(S_{1,k},E_{1,k})\in C$. Suppose $v_1$ is factorized into $v_1 = v_{1,1}v_{1,2}$ (or more factors) such that, for both $i\in[2]$, we have $v_{1,i}\hpdbe\varepsilon$. Then, $v_{1,1} = S_{1,k}...E_{1,k}$ and $v_{1,2}$ would start immediately after $E_{1,k}$ in $w$. As before, $v_{1,2}$ would need to start with a left context $x$ for some $(x,y)\in C$. However, by construction of $w$ we know that after $E_{1,k}$ follows just the letter of the $\nth{k}$ transition of $q_1$, but no left context. So, this is a contradiction and we have $v = v_{1,1}$ as well as $u_2[1] = \ta_{1,k}$, where $\ta_{1,k}$ represents the letter of the $\nth{k}$ transition of $q_1$.

Now, suppose that $u_2$ has length $|u_2|>1$. Then $v_2$ cannot start with the left context $S_{\delta(q_1,\ta_{1_k})}$. But then, there are only $3$ possibilities for the next occurring left context in $w^\omega$. Either, it is the left context $S_{\delta(q_1,\ta_{1,k+1})}$, i.e., the left context representing the jump to another state for the $\nth{k+1}$ transition of $q_1$, or, if $q_1$ is also a final state, then the left context $S_{end}$ which is used in the termination process, or, if $q_1$ is not a final state and $q_1$ only has $k$ transitions, the collection of left contexts used to determine the choice of transitions in the state $q_2$, e.g., $S_{2,1}$, $S_{2,2}$, and so on. No matter the choice of the next left context, we observe that their occurrences are disjoint from the occurrence of $S_{\delta(q_1,\ta_{1,k})}$. Hence, $u_2$ has a prefix $\ta_{1,k}S_{\delta(q_1,\ta_{1,k})}$, which is a contradiction to the definition of maximal parallel bounded hairpin deletion. So, we can only have $u_1 = \ta_{1,k}$ and that $v_2$ has the prefix $S_{\delta(q_1,\ta_{1,j})}$. 
    
For readability purposes, until defined otherwise, from now on assume that $q_i = \delta(q_1,\ta_{i,j})$. Hence, $S_{\delta(q_1,\ta_{1,j})} = S_{q_i}$. In contrast to $v_1$, for $v_2$, we have to show that it always contains exactly $2$ subsequent hairpin deletion steps. First, as $v_2$ has the prefix $S_{q_i}$ for which only the unique right context $E_{q_i}$ in $(S_{q_i},E_{q_i})\in C$ exists, we know that $v_2$ has a prefix $S_{q_i}...E_{q_i}$.
By the definition of $w^\omega$, we know that immediately after that, we are in the beginning of the factor $s_i$. As $s_i$ is analogously constructed to $s_1$, we know by analogous arguments to the ones made for $u_1$ and $v_1$, that we have to start with some context $(S_{i,k'},E_{i,k'})\in C$, $k$ referring to the selected transition, to continue. This results in $v_2$ containing at least 2 subsequently removed hairpins. We know by the arguments from before that after the occurrence of $E_{i,k'}$ in $w^\omega$, there follows some letter $\ta_{i,k'}$ which represents the letter from the $\nth{k'}$ transition of $q_i$. Also, we know by the same arguments that this occurrence of $\ta_{i,k'}$ is not part of any left context in $C$. Hence, $v_2$ consists of exactly two subsequent hairpin deletion steps. 
    
In addition, we obtain that $u_3$ has the letter $\ta_{i,k'}$ as a prefix. By analogous arguments from before, we get that $u_2$, and in particular every following $u_i$, at least for $i\in[n-1]\setminus{1}$ (shown in terminating condition), is a single letter representing the letter of some transition of $A$. By induction, we also know that the order of those always represents a valid computation on $A$.

So, by now we know that $u$ must always have a prefix that represents the prefix of some word in $L(A)$. What's left to show is that termination can only occur, when $u$ actually is in $L(A)$. This can be done by a combinatorial argument on the construction of $w^\omega$.

First, we know that we can only terminate before the position of the factor $t\in T$ in $w$. Also, we know that we must need a hairpin stem of length $2m$ obtained by hairpin deletion immediately before that occurrence of $t$. By the construction of $s_e$ in $w$, we know that $t$ is preceded by the factor $f_e$ of length $m$. $f_e$ is not part of any context. Hence, $u_{n+1}$ has a suffix $f_e$. We must obtain the suffix $\theta(f_e)f_e$ by hairpin deletion from $w^\omega$. By construction, we know $\theta(f_s) = f_e$, so $\theta(f_e) = f_s$. By construction, we also know that $f_e$ is preceded by $E_{end}$ which is, by assumption, not a factor of $f_s$. Hence, $E_{end}$ needs to be removed by hairpin deletion, resulting in $u_{n+1} = f_e$. This is only possible with the corresponding context $(S_{end},E_{end})\in C$ which can only be found in the encoding words $s_i$ of final states $q_i\in F$. By construction, these occurrences of $S_{end}$ are preceded by $f_s$. Also by construction, $f_s$ is preceded by some right context $E_{i,k+1}$, assuming the encoding $s_i$ of some final state $q_i$ with $k$ transitions. Using the inductive arguments from before, we know that we can only obtain $f_s$ as part of some factor $u_n$ if and only if we select the context $(S_{i,1}...S_{i,k+1},E_{i,k+1})$ as the last removed hairpin in $v_{n-1}$. Hence, $u_n = f_s$.
    
Also by the inductive arguments from before, we know that $v_{n-1}$ consists of exactly two subsequently removed hairpins, where the first one is enclosed by $(S_{q_i},E_{q_i})$. For all other previous $u_j$, for $j\in[n-1]$, we know that they are all single letters representing a prefix of a valid computation on $A$. As we can obtain $f_s$ only in final states, we know that $u_1...u_{n-1}\in L(A)$. Also, we know that the suffix of length $2m$ is actually $s = f_sf_e = u_nu_{n+1}$. So, $u = u_1...u_{n-1}$ and by that $u\in L(A)$. This concludes this direction and the proof of this construction.
\newpage

\section{Additional Content Regarding the Example for Theorem~\ref{theorem:dfa-simulation} using the RNA Alphabet $\Sigma_{RNA} = \{\tA,\tU,\tC,\tG\}$}
\label{section:appendix-simulation-example}
\subsection{Detailed Examination of Factors in the General Construction}
\label{sec:appendix-simulation-example-general-examination}
This detailed examination is related to point $(4)$ in Section~\ref{sec_sim_rna-framework}.

For all contexts $(S_{i,1}...S_{i,j}, E_{i,j})\in C$, we see that they are bordered by either $\tA\tA$ or $\tU\tU$, followed and preceded by either $\tC$ or $\tG$. For all $S_{i,1}...S_{i,j}$, we know that after the first $\tA\tA$, there occur $i$ many $\tC$'s. We notice that there is only one position in the word where $\tA\tA(\tC)^{i}$ occurs and that is in the beginning of $s_i$. In addition to that, now regarding $E_{i,j}$, we notice that there are two positions where the corresponding $(\tG)^i\tU\tU$ occurs, and these are in either in the end of $\theta(S_{i,1}...S_{i,j})$ or the end of $E_{i,k}$. As we need an occurrence of a right context that is preceded by a valid hairpin, we see that the first occurrence has to be used in the hairpin-formation between $S_{i,1}...S_{i,j}$ and $\theta(S_{i,1}...S_{i,j})$, and we see that the second occurrence can then be used as a right context. So, no other unintended placements are possible. The same holds for each context $(S_{q_i},E_{q_i})\in C$ which are bordered by $\tA\tA\tA$ (resp. $\tU\tU\tU$), encapsulating a unique number of $o_i$ many $\tC$'s or $\tG$'s. As before, the word used in the right context also occurs two times, once for the hairpin formation and once for the right context recognition. The first occurrence has to be used to form a valid hairpin and the second one, as the margin is set to $0$, has to be a consecutive factor, i.e., the intended occurrence of $E_{q_i}$. The same can be said analogously for $(S_{end},E_{end})\in C$. Hence, this construction generally results in no unintended factors, as long as the numbers of $\tC$'s and $\tG$'s between the respective borders are pumped up enough. One has to make sure that the letters from the transitions $\ta_{i,k}$ are not part of some context. But this is prevented as these are always preceded by $\tU\tU$ in the end of $E_{i,k}$ and followed by $\tA\tA$ in $S_{q_{i,k}}$. So, they cannot be part of any context. The words $f_s$ and $f_e$ might also interfere with some context, but as both are bordered by $(\tA)^4$ (resp. $(\tU)^4$), this cannot happen. The termination word $t = (\tU)^10$ can also not be part of any context, as is is preceded by $(\tU)^4$ in $f_e$ and followed by $\tU\tU\tU$ in $\theta(S_{q_1})$ (see construction in proof of Theorem~\ref{theorem:dfa-simulation} to verify this).

\subsection{Specific Example Construction}
\label{sec:appendix-simulation-example-specific}
This section contains a specific example for the construction of a circular word $w^\omega$ for some NFA $A$ for which we have $L(A)=[w^\omega]_{\hpdblogpmax}$, using the framework given in Section~\ref{sec_sim_rna-framework}. Let $A = (Q,\Sigma_{\mathtt{RNA}},q_1,\delta,F)$ be some NFA that is defined by the following graph representation: 
\begin{figure}[h]
    \centering
    \includegraphics[width=4cm]{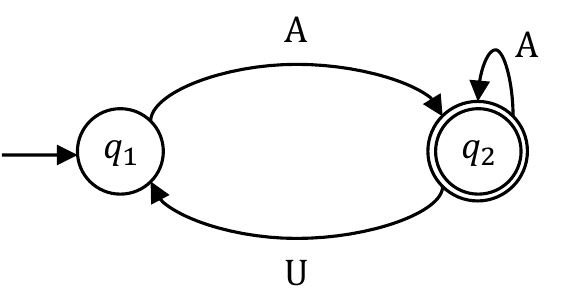}
    \caption{NFA $A$ with $Q = \{q_1,q_2\}$, $F = \{q_2\}$, and $\delta = \{((q_1,\tA),q_2), ((q_2,\tU),q_1), ((q_2,\tA),q_2)\}$. So it accepts the regular language $\tA^*(\tU\tA^*)^*$.}
    \label{fig:nfa-simulation-example}
\end{figure}
We use the construction in Theorem~\ref{theorem:dfa-simulation} and encoding given above to obtain a word $w$ for which we have $[w^\omega]_{\hpdblogpmax_S} = L(A)$ for the properties tuple $S = (\Sigma_{\mathtt{RNA}},\theta,1,C,0,\{t\},9)$. Notice that the margin is of length 0, that the set of terminating sequences only contains $t = (\tU)^{10}$, that the log factor is $1$, that the alphabet is $\Sigma_{\mathtt{RNA}}$, and that $\theta$ is given as above.

First, we set the state encodings $s_1$ and $s_2$. Notice that $q_1$ only has $1$ outgoing transition while $q_2$ has two outgoing transitions and is a final state. For $s_1$, we obtain
$$
s_1 =
\underbrace{\tA\tA\tC\tA\tA\tG\tA\tA}_{S_{1,1}}\ 
\underbrace{
\underbrace{\tU\tU\tC\tU\tU\tG\tU\tU}_{\theta(S_{1,1})}\ 
\underbrace{\tU\tU\tC\tU\tU\tG\tU\tU}_{E_{1,1}}\ 
\underbrace{\color{blue}\tA}_{\ta_{1,1}}\ 
\underbrace{\tA\tA\tA(\tG)^6\tA\tA\tA}_{S_{q_2}}\ 
}_{e_{1,1}}
\underbrace{\tU\tU\tU(\tG)^6\tU\tU\tU}_{\theta(S_{q_2})}\ 
\underbrace{\tU\tU\tU(\tG)^6\tU\tU\tU}_{E_{q_2}}.
$$
Notice that we set $o_2 = 6$ in $S_{q_2}$ and $E_{q_2}$ due to the total size of the word. For $s_2$, due to space constraints, we first give the abstract structure and then the assignment to each word. Notice that $s_2$ is the last state encoding before appending $s_e$ to $w$. Hence, no $\theta(S_{q_3})E_{q_3}$ has to be added in the end. We have
$$
s_2 =
S_{2,1}S_{2,2}S_{2,3}\ 
\underbrace{\theta(S_{2,1})E_{2,1}{\color{blue}\tU} S_{q_1}}_{e_1}\ 
\underbrace{\theta(S_{2,1}S_{2,2})E_{2,2}{\color{blue}\tA} S_{q_2}}_{e_2}\ 
\underbrace{\theta(S_{2,1}S_{2,2}S_{2,3})E_{2,3}{\color{orange}f_s} S_{end}}_{e_3}\ 
$$
for the following assignments:
$$
S_{2,1}S_{2,2}S_{2,3} = 
\underbrace{\tA\tA\tC\tC\tA\tA\tG\tA\tA}_{S_{1,1}}
\underbrace{\tG\tA\tA}_{S_{2,2}}
\underbrace{\tG\tA\tA}_{S_{2,3}}
$$
$$\theta(S_{2,1})E_{2,1}\ {\color{blue}\tU}\ S_{q_1} = 
\underbrace{\tU\tU\tC\tU\tU\tG\tU\tU}_{\theta(S_{2,1})}\ 
\underbrace{\tU\tU\tC\tU\tU\tG\tU\tU}_{E_{2,1}}\ 
{\color{blue}\tU}\ 
\underbrace{\tA\tA\tA(\tC)^5\tA\tA\tA}_{S_{q_1}}
$$
$$
\theta(S_{2,1}S_{2,2})E_{2,2}\ {\color{blue}\tA}\ S_{q_2} =
\underbrace{\tU\tU\tC\tU\tU\tC\tU\tU\tG\tU\tU}_{\theta(S_{2,1}\theta(S_{2,2}))}\ 
\underbrace{\tU\tU\tC\tU\tU\tC\tU\tU\tG\tU\tU}_{E_{2,2}}\ 
{\color{blue}\tA}\ 
\underbrace{\tA\tA\tA(\tC)^6\tA\tA\tA}_{S_{q_2}}
$$
$$
\theta(S_{2,1}S_{2,2}S_{2,3})E_{2,3}f_s S_{end} =
\underbrace{\tU\tU\tC\tU\tU\tC\tU\tU\tC\tU\tU\tG\tU\tU}_{\theta(S_{2,1}S_{2,2}S_{2,3})}\ 
\underbrace{\tU\tU\tC\tU\tU\tC\tU\tU\tC\tU\tU\tG\tU\tU}_{E_{2,3}}\ 
\underbrace{\color{orange}(\tA)^4\tG(\tA)^4}_{\color{orange}f_s}\ 
\underbrace{\tA\tA\tA(\tC)^7\tA\tA\tA}_{S_{end}}
$$
Notice that we set, in addition to $o_2 = 6$, the numbers $o_1 = 5$ and $o_e = 7$ for the words $S_{q_1},E_{q_1}$, $S_{end}$ (thus also for $E_{end}$ in $s_e$). The numbers are selected such that hairpin deletion regarding the logarithmic model works. Finally, $s_e$ is encoded in the following manner:
$$
s_e = 
\underbrace{\tU\tU\tU(\tG)^7\tU\tU\tU}_{\theta(S_{end})}\ 
\underbrace{\tU\tU\tU(\tG)^7\tU\tU\tU}_{E_{end}}\ 
\underbrace{\color{orange}(\tU)^4\tC(\tU)^4}_{\color{orange}f_e}\ 
\underbrace{\color{red}\tU\tU\tU\tU\tU\tU\tU\tU\tU\tU}_{\color{red}t}\ 
\underbrace{\tU\tU\tU(\tG)^5\tU\tU\tU}_{\theta(S_{q_1})}\ 
\underbrace{\tU\tU\tU(\tG)^5\tU\tU\tU}_{E_{q_1}}
$$
Remember, that $\theta(S_{q_1})E_{q_1}$ is needed to jump to $s_1$. It is encoded in the end of $s_e$ such that each word starts with the left context $S_{1,1}$ in $s_1$. In total, we set
$$w = s_1\ s_2\ s_e.$$
We have to check that each intended hairpin can actually be formed, regarding the length bound obtained by the logarithmic energy model. But, for that we notice that each left context for the transition selection has at least length $|S_{1,1}| = 8$, which supports a loop of length $2^8=256$, which clearly works out inside of each $s_1$ and $s_2$. Regarding jumps between $s_1$, $s_2$ and $s_e$, notice that the shortest context responsible for such a jump, $S_{q_1}$, has length $|S_{q_1}| = 11$, which supports a loop of length $2^{11} = 2048$. That is longer that $w$ itself, hence such a log-hairpin can also always be formed.

We obtain $[w]_{\hpdblogpmax_S} = L(A)$ by the proof of Theorem–\ref{theorem:dfa-simulation} and the fact, that no factor occurs in an unintended position.

\newpage

\section{Additional Content Regarding Section~\ref{sec_dec}}
\label{section:appendix-proofs-ssbnfas}
\subsection{Formal Definitions of RB-NFAs and DB-NFAs}

\begin{definition}
    A \emph{return-bounded nondeterministic finite automaton} (RB-NFA) $A$ is a hexatuple $A = (Q,c,\Sigma,\delta,q_0,F)$
    where w.l.o.g. $Q = \{q_1,...,q_{|Q|}\}$ is a \emph{ordered finite set of states} with $q_i < q_{i+1}$ for $i\in[|Q|-1]$, $c\in\N$ denotes the \emph{return-bound},
    $\Sigma$ denotes a \emph{finite alphabet}, $\delta : Q\times\Sigma \rightarrow Q$ denotes the transition function,
    $q_0\in Q$ the initial state, and $F\subseteq Q$ the set of final states.

    The language $L(A)$ of a RB-NFA is the language of all word $w\in\Sigma^*$ where $\delta(q_1,w)\in F$
    and the shortest accepting path $q_0 = p_{1}\rightarrow \cdots \rightarrow p_\ell = \delta(q_0,w)$ satisfies 
    the condition that there are no more than $c$ pairs $p_{i}> p_{i+1}$ for $i\in[\ell-1]$.
\end{definition}

\begin{definition}
    A \emph{distance-bounded nondeterministic finite automaton} (DB-NFA) $A$ is a heptatuple $A = (Q,f_d,c,\Sigma,\delta,q_0,F)$
    where w.l.o.g. $Q = \{q_1,...,q_{|Q|}\}$ is a \emph{ordered finite set of states} with $q_i < q_{i+1}$ for $i\in[|Q|-1]$, $c\in\N$ denotes the \emph{distance-bound},
    $\Sigma$ denotes a \emph{finite alphabet}, $\delta : Q\times\Sigma \rightarrow Q$ denotes the transition function,
    $q_0\in Q$ the initial state, and $F\subseteq Q$ the set of final states, and $f_w : Q\times Q \rightarrow \N$ is a distance function such that $f_w(q_i,q_{i+1})\in\N$ for $i\in[|Q|-1]$, $f_w(q_{|Q|},q_1)\in\N$, $f_w(q_i,q_j) = \sum_{i'\in[i..j-1]}f_w(q_{i'},q_{i'+1})$ if $i < j-1$, and $f_w(q_i,q_j) = f_w(q_i,q_{|Q|}) + f_w(q_{|Q|},q_1) + f_w(q_1,q_j)$ if $j < i$.

    The language $L(A)$ of a DB-NFA is the language of all word $w\in\Sigma^*$ where $\delta(q_1,w)\in F$
    and there exists an accepting path $q_1 = q_{i_1}\rightarrow \cdots \rightarrow q_\ell = \delta(q_1,w)$ satisfies 
    the condition that $f_w(q_{i_j},q_{i_{j+1}}) \leq c$ for $j\in[\ell-1]$.
\end{definition}

\subsection{Omitted Proofs of Section~\ref{sec_dec}}

Proof of \textbf{Proposition~\ref{prop:hardness-small-cover-set-avoiding-forbidden-patterns-problem}}
\begin{proof}
    We can reduce the minimisation of nondeterministic finite cover automata (NFCA), which is known to be NP-hard by a result of Campeanu's~\cite{Campeanu2015}, to Problem~\ref{problem:small-cover-set-avoiding-forbidden-patterns}(cover, states). Let $L\subset\Sigma^*$ be a finite regular language and let $\ell\in\N$ such that for the longest word $w_j\in L$ we have $|w_j| = \ell$. Let $A = (Q,\Sigma,q_0,\delta,F)$ be some NFA. We reduce the problem of checking whether $A$ is a minimal NFCA for $L$ to the problem of checking whether some NFA is minimal regarding a set of target words $W$ and a set of forbidden patterns $F$. We leave the NFA $A$ unchanged. Let $W = L$ and $F = \Sigma$. Then $(W,F,A)$ is a valid instance for Problem~\ref{problem:small-cover-set-avoiding-forbidden-patterns}(cover, states).

    Assume $A$ is a state-minimal NFCA for $L$, i.e. $L(A) \cap \Sigma^{\leq\ell} = L$.
    Suppose $A$ is not a state-minimal NFA such that 
    $W\subseteq L_{\leq \ell}(A)$ and 
    $\left( L_{\leq \ell}(A)\setminus W \right) \cap \Sigma^* F\Sigma^* = \emptyset$.
    Then, there exists some smaller NFA $B = (Q',\Sigma,q_0,\delta,F)$ such that $|Q'| < |Q|$ with
    $W\subseteq L_{\leq \ell}(B)$ and $\left( L_{\leq \ell}(B)\setminus W \right) \cap \Sigma^* F\Sigma^* = \emptyset$.
    Then, we directly have that for all $u\in L$ we also have $u\in L(B)$ as $u\in W$.
    We also have for all $u'\in\Sigma^{\leq \ell}$ with $u'\notin W$ that $u'\notin L(B)$ as $F = \Sigma$ and at least some letter
    of the alphabet always occur sin $u'$. Hence, $B$ is a smaller NFCA for $L$ which is a contradiction to the assumption.
    So, $A$ must be state-minimal for Problem \ref{problem:small-cover-set-avoiding-forbidden-patterns}(cover, states).

    Now, assume $A$ is not a minimal NFCA for $L$. Then there exists some smaller NFCA $B = (Q',\Sigma,q_0,F)$
    such that $|Q'| < |Q|$ and $L(B) \cap \Sigma^{\leq\ell} = L$.
    Let $u\in W$. Then $u\in L(B)$ as $B$ is a NFCA for $L$. By that we have $W \subseteq L_{\leq\ell}(B)$.
    Let $u'\in\Sigma^{\leq\ell}$ such that $u'\notin W$.
    As $B$ is a NFCA for $L$, we get that $u'\notin L(B)$.
    This results in $\left( L_{\leq \ell}(B)\setminus W \right) \cap \Sigma^* F\Sigma^* = \emptyset$ as 
    $L_{\leq \ell}(B)\setminus W = \emptyset$.
    Hence, we have that $A$ is not a minimal NFA regarding $W$ and $F$ for Problem \ref{problem:small-cover-set-avoiding-forbidden-patterns}(cover, states).

    By the above we get that solving a $(W,\Sigma,A)$ instance of Problem 2(cover, states) is equivalent to finding the minimal NFCA accepting $L_{\leq\ell}(A)$ for some input NFA $A$.
\end{proof}

Proof of \textbf{Proposition~\ref{lemma:ssb-nfa-min-in-np}}
\begin{proof}

    \textbf{When $L$ is given as a set:}
    We show it for $\mathtt{SSB\text{-}NFA\text{-}Min\text{-}States}$. The proof for $\mathtt{SSB\text{-}NFA\text{-}Min\text{-}Transitions}$
    works analogously.
    Let $L$ be any regular language and let $k,c\in\N$ be numbers.
    We know a trivial automaton for $L$ has a single path for each word in $L$.
    This trivial automaton has at most $\sum_{w\in L}(|w|) +1$ many states and transitions which is polynomial in the size of the input as $L$ is given plainly as a set of strings. So, if $k \geq \sum_{w\in L}(|w|) +1$, the answer is always $\mathtt{true}$.
    So, assume $k < \sum_{w\in L}(|w|) +1$. Guess some SSB-NFA $A = (Q,c',\Sigma,\delta,q_0,F)$ with $|Q| \leq k$ and $c' = c$.
    Check whether $L(A) = L$. Checking $L \subseteq L(A)$ can be done in linear time by checking whether for all $w\in L$ we have $w\in L(A)$. To check whether there exists some word $w\in L(A)$ for which we have $w\notin L$, we can do the following. Remove each state (and the corrsponding transitions) which do not lead to a final state. This can be done in $O(k)$ by reversing the direction of all transitions, adding a new initial state which points to all final states, and check with BFS/DFS which states are reachable from that new initial state.
    Now, we construct the prefix-tree $T_L$ of $L$. This can be done in polynomial time and space. We know the max depth of $T_L$ is the length of the longest word in $L$.
    Now, we iteratively generate the prefix tree $T_{L(A),k}$ of all words of length $k$ contained in $L(A)$ using BFS, starting with length $1$ and moving up. In each iteration, we check whether $T_{L(A),k}$ is completely contained in $T_L$.
    If that is not the case, we know that $L(A)$ has a word which is not in $L$ by the assumption that no path leads to a non-final state. We know that $L(A) \subseteq L$ if after the first iteration where $T_{L(A),k} = T_{L(A),k+1}$ we have that $T_{L(A),k+1}$ is completely contained in $T_{L}$. We know this whole procedure takes polynomial time as we have at most
    $|w|$ many iterations (for $w\in L$ being the longest word in $L$) and we either have to accept or reject at this point and, additionally, for each iteration we know that $T_{L(A),k}$ is of smaller or equal size than $T_L$ as otherwise we would have rejected already, hence comparing $T_L$ and $T_{L(A),k}$ is polynomially upper bounded by the size of $T_L$. By that we get that we can check whether $L(A) \subseteq L$ in polynomial time.

    \textbf{When $L$ is represented by some DFA, NFA, or regular expression:}
    Again, the show it just for $\mathtt{SSB\text{-}NFA\text{-}Min\text{-}States}$. 
    The proof for the transition variant works analogously.
    Given any regular language $L$ and numbers $k,c\in\N$, first check if its finite. If not, return $\mathtt{false}$.
    Depending on the representation of $L$, different settings may occur.
    First, if $L$ is given as a DFA $A_d = (Q_d, \Sigma, \delta_d, q_{d,0}, F_d)$, we know that no loops can occur on paths leading to a final state in $A_d$ as otherwise $L$ would be an infinite language. If $|Q_d| \leq k$, then $A_d$ is immediately a solution for the problem and we can return $\mathtt{true}$. 
    If $|Q_d| > k$, guess some SSB-NFA $A = (Q,c',\Sigma,\delta,q_0,F)$ such that $|Q|\leq k$ and $c' = c$
    and check whether $L(A) = L$. As $|Q| \leq k < |Q_d|$, we know that we only need a polynomial amount of space regarding the input space.
    If $L$ is given as a NFA, the same holds analogously.
    If $L$ is given as a regular expression,
    transfer it to a NFA $A' = (Q',\Sigma,\delta',q_0',F')$ in polynomial time~\cite{hopcroft}.
    By the polynomial time constraint to convert the regular expression to the NFA $A'$, we know that $A'$ is in polynomial space regarding the input. Now use the same arguments as before in an analogous manner.
    This concludes the claim.
\end{proof}

Proof of \textbf{Proposition~\ref{lemma:ssb-nfa-min-hard}}.
\begin{proof}
    We begin with a proof that assumes an unbounded alphabet size to obtain an intuition. After that, we give a construction what works for any bounded alphabet.

    \textbf{Unbounded alphabet size:} The idea of this proof is very similar to the ones from \cite{DBLP:conf/lata/GruberH07} and \cite{Amilhastre2001}. The result is shown by reducing the $\mathtt{Biclique\text{-}Cover\text{-}Number}$ problem to $\mathtt{SSB\text{-}NFA\text{-}Min}$. The first one is known to be NP-complete \cite{Orlin1977}. Let $G = (V,E)$ be some bipartite graph such that $V = V_1 \cup V_2$ with $V_1 = \{a_1, a_2, ... , a_n\}$ and $V_2 = \{b_1, b_2, ... , b_m\}$ such that for all $(u,v)\in E$ we have $u\in V_1$ and $v\in V_2$. Let $k\in\N$ be some number. We construct an alphabet $\Sigma = V$ and a language $L = \{\ a_ib_j\ |\ (a_i,b_j)\in E\}$ over $\Sigma$. Additionally, we set $c = 1$ and $k' = k+2$.

    For the first direction, assume there exists a covering of $C = \{C_1,...,C_{k_c}\}$ containing $k_c \leq k$ many bicliques. Then, we construct a SSB-NFA $A = (Q,c,\Sigma,\delta,q_s,F)$ such that $Q = \{q_s, q_1, ... , q_{k_c}, q_f\}$, and $F = \{q_f\}$. Clearly $|Q| = k_c+2 \leq k'$. For each $C_i \in C$ we have $C_i = (V_i,E_i)$ for some $V_i\subseteq V$ and $E_i\subseteq E$ with $V_i = V_{i,1} \cup V_{i,2}$ such that for all $(a_{j_1},b_{j_2})\in E_i$ we have $a_{j_1}\in V_{i,1}$ and $b_{j_2}\in V_{i,2}$. For each such $(a_{j_1},b_{j_2})\in E_i$ we add the transitions $\delta(q_s,a_{j_1}) = q_i$ and $\delta(q_i,b_{j_2}) = q_f$ to $A$. This results in $a_{j_1}b_{j_2}\in L(A)$ for all $a_{j_1},b_{j_2}\in V_i$ as $C_i$ is a biclique. As $C$ covers the whole graph $G$ we know that $L \subseteq L(A)$. As no other transition to the final state are added, we do not produce any additional words. Hence $L(A) = L$ which concludes this direction.

    For the other direction, assume there exists a SSB-NFA $A = (Q,c',\Sigma,\delta,q_s,F)$ such that $|Q| \leq k' = k + 2$ and $c' \leq c = 1$ (hence $c' = 1$) such that $L(A) = L$. As $c = 1$, no state can be reached from itself again to obtain any words in $L$. So, there exists at least one final state $q_f \in Q$ with $q_f \neq q_s$. This leaves us with potentially $k$ additional states. As all words in $L$ are of length $2$ and no state can reach itself from itself again, we know that for each word $w = \in L$ with $w = a_{j_1}b_{j_2}$ there exists a path of $3$ states $q_sq_iq_f$ in $A$ where $q_f\in F$ and $q_i$ is an intermediate state not equal to $q_s$ or $q_f$ such that $\delta(q_s,a_{j_1}) = q_i$ and $\delta(s_f,a_{j_2})$. Construct for each such intermediate state $q_i$ a graph $C_i = (V_i,E_i)$ such that $V_i = V_{i,1} \cup V_{i,2}$ with $a_{j_1}\in V_{i,1}$, $b_{j_2}\in V_{i,2}$, and $(a_{j_1},b_{j_2})\in E_i$ for all $a_{j_1},b_{j_2\in\Sigma}$ with $\delta(q_s,a_{j_1}) = q_i$ and $\delta(s_f,a_{j_2})$. By this construction, we see that $C_i$ must be a biclique. We know that $A$ has at most $k$ such intermediate steps to obtain $L$. Hence, we obtain a set $C$ of at most $k$ bicliques which cover the whole graph $G$ by the way $L$ is constructed. This concludes this direction.

    For the transition variant of $\mathtt{SSB\text{-}NFA\text{-}Min}$ we also obtain NP-hardness by the same reduction by adapting the variable $k'$ such that $k' = 2k$. The arguments are analogous to the case related to state minimization.

    \textbf{Bounded alphabet size:} We show the binary case out of which we can reduce to any alphabet size greater than $2$.
    Let $G=(V_1,V_2,E)$ be a bipartite graph and let $k\in\N$ be a number.
    W.l.o.g. assume that $V_1 = \{a_1, ... , a_n\}$ and $V_2 = \{b_1, ... , b_m\}$.
    Set $\Sigma := \{0,1\}$ and construct a language $L$ such that
    $ L := \{\ 0^i110^j\ |\ (a_i,b_j)\in E \text{ for } i\in[n],j\in[m]\}. $
    Also set $k' = 2+n+m+k$ and $c=1$.

    For the first direction, assume there exists a biclique covering $C := \{C_1,...,C_{k_c}\}$ of $G$ with $k_c \leq k$
    many biclques such that for each $r\in[k_c]$ we have $C_r = (V_{1,r},V_{2,r},E_r)$. 
    Let $A = (Q,c,\Sigma,\delta,q_s,F)$ be a SSB-NFA such that
    $$Q := \{\ q_s,q_f\ \} \cup \{\ q_{a_i}\ |\ i\in[n]\ \} \cup \{\ q_{b_j}\ |\ j\in[m]\ \} \cup \{\ q_{c_r}\ |\ r\in[k_c]\ \},$$
    $F := \{q_f\}$ and $\delta : Q\times\Sigma \rightarrow Q$ is defined by
    \begin{itemize}
        \item $\delta(q_s,0) := \{ q_{a_1} \}$,
        \item $\delta(q_{a_i},0) := \{ q_{a_{i+1}} \}$ for $i\in[n-1]$,
        \item $\delta(q_{b_1},0) := \{ q_f \}$,
        \item $\delta(q_{b_i},0) := \{ b_{b_{i-1}} \}$ for $i\in[m]\setminus\{1\}$, and
        \item $q_{c_r}\in\delta(q_{a_i},1)$ and $q_{b_j}\in\delta(q_{c_r},1)$ if $(a_i,b_j)\in E_r$ f.s. $r\in[k_c]$.
    \end{itemize}
    Then $|Q| = 2+n+m+k_c \leq 2+n+m+k = k'$ (see Figure \ref{fig:ssb-nfa-min-hardness-sketch}). 
    Now, we have to show that $L = L(A)$.
    Let $w\in L$. Then $w=0^i110^j$ for some $i\in[n]$ and $j\in[m]$.
    By definition, there exists some edge $(a_i,b_j)\in E$ and, in extension, as $C$ is a biclique covering,
    there also exists a biclique $C_r\in C$ such that $(a_i,b_j)\in C_r$ for some $r\in[k_c]$.
    So, by construction, $q_{a_i}\in\delta(q_s,0^i)$, $q_{c_r}\in\delta(q_{a_i},1)$, $q_{b_j}\in\delta(q_{c_r},1)$,
    and $q_f\in\delta(q_{b_j},0^j)$. Hence $0^i110^j\in L(A)$. 
    Now let $w\in L(A)$. We know $q_f$ is the only final state in $A$. To reach $q_f$, there must exist
    $i\in[n]$, $j\in[m]$, and $r\in[k_c]$ such that $q_f\in\delta(q_{b_j},0^j)$, $q_{b_j}\in\delta(q_{c_r},1)$,
    $q_{c_r}\in\delta(q_{a_i},1)$, and $q_{a_i}\in\delta(q_s,0^i)$. Hence, $w = 0^i110^j$.
    By construction, $q_{b_j}\in\delta(q_{c_r},1)$ and $q_{c_r}\in\delta(q_{a_i},1)$ only if $(a_i,b_j)\in E_r$.
    By that, we also know $(a_i,b_j)\in E$ out of which $w\in L$ follows by construction. So, we get $L(A) = L$
    which concludes this direction.

    \begin{figure}[h]
    \centering
    \includegraphics[width=13cm]{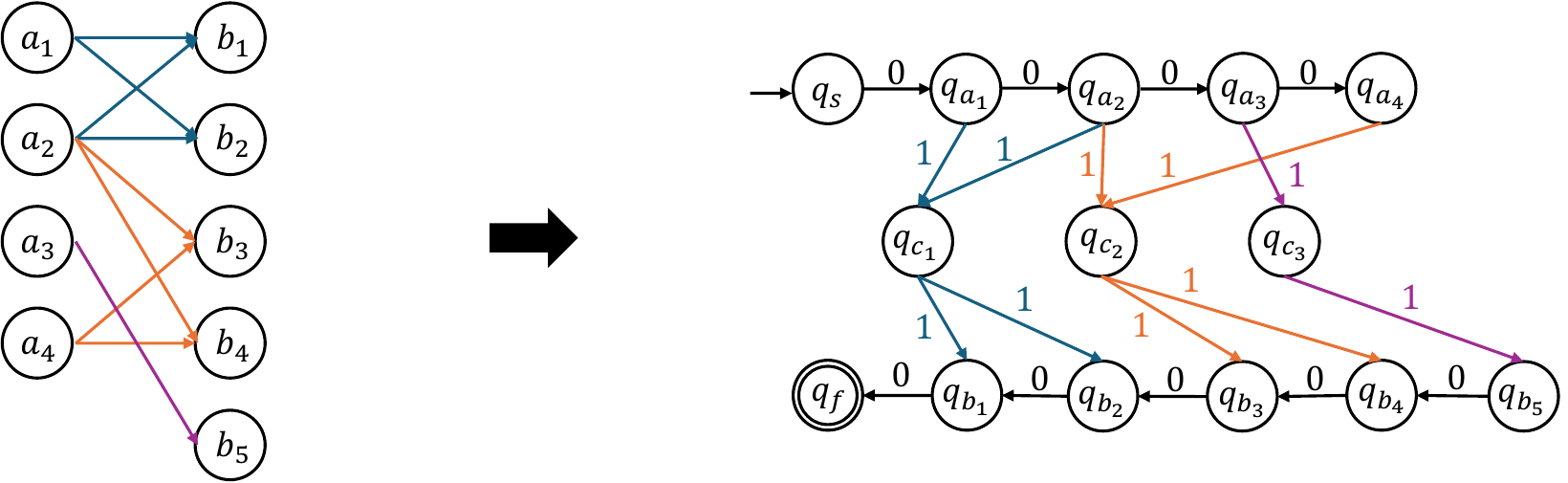}
    \caption{Sketch of the construction used in Proposition \ref{lemma:ssb-nfa-min-hard}. The left side shows a bipartite graph for which a biclique cover with $k=3$ bicliques exists (blue, orange, purple). The right side shows an automaton recognizing the constructed language $L$ (see proof of Proposition~ \ref{lemma:ssb-nfa-min-hard}) using $2+n+m+k$ many states. Each of the $3$ states in the middle poses as one of the cliques.}
    \label{fig:ssb-nfa-min-hardness-sketch}
    \end{figure}

    For the other direction assume there exists some SSB-NFA $A = (Q,c',\Sigma,\delta,q_s,F)$ such that $L = L(A)$,
    $|Q| \leq k' = 2+n+m+k$, and $c' = c = 1$. As $L(A) = L$ we know
    that $0^n110^j,0^i110^m\in L(A)$ for some $j\in[m]$ and $i\in[n]$. As $c'=1$, there must exist a chain of $n$ many
    states after the initial state (w.l.o.g. called $q_{a_1}$ to $q_{a_n}$) such that $q_{a_1}\in \delta(q_s,0)$
    and $q_{a_{i+1}}\in\delta(q_{a_i},0)$ for $i\in[n-1]$ as well as a chain of $m$ many states (w.l.o.g. called $q_{b_1}$ to $q_{b_m}$) leading to a final state $q_f$ such that $q_f\in\delta(q_{b_{1}},0)$ and $q_{b_{i-1}}\in\delta(q_{b_i,0})$ for
    $i\in[m]\setminus\{1\}$. All of the aforementioned states must be pairwise distinct as, first, no state may be reached more than once, every word in $L(A)$ has a factor $11$ and if some state of $q_s,q_{a_1},...,q_{a_n}$ is equal to some state of
    $q_{b_{1}},...,q_{b_m},q_f$ there would exist a word $u\in L(A)$ such that $u\in\{0\}^*$ which is a contradiction.
    So we know of the existence of $n+m+2$ many pairwise distinct states which leaves us with up to $k$ many states.
    Let $w\in L(A)$. Then $w=0^i110^j$ for some $i\in[n]$ and $j\in[m]$. Then there exists a state $q_{c_r}\in Q$
    such that $q_{c_r}\in\delta(q_s,0^i1)$ and $\delta(q_{c,r},10^j)\in F$. Clearly, $q_{c_r} \neq q_s$
    and $q_{c_r}\notin F$ as $c'=1$ and otherwise $0^i1\in F$ which is a contradiction to the definition of $L$.
    Suppose $q_{c_r} = q_{a_{i'}}$ for some $i'\in[n]$. As $c' = 1$ we can assume $i' > i$.
    We know $0^n110^{j'}\in L$ for some $j'\in[m]$ and additionally $q_{a_{i'}}\in\delta(q_s,0^{a_{i'}})$
    and $q_{a_n}\in\delta(q_{a_{i'}},0^{n-i'})$.
    By that, we have that $0^i10^{n-i'}110^{j'}\in L$ which is a contradiction.
    Now, suppose $q_{c_r} = q_{b_{j'}}$ for some $j'\in[m]$. Then we have that $0^i10^{j'}\in L$ which is a contradiction.
    So, $q_{c_r}$ is a distinct state and one of the up to $k$ states which were left. From now on, we call each of those states $q_{c_r}$ an intermediate state.
    For each of those up to $k$ intermediate states $q_{c_r}$ we construct a graph $C_r = (v_{1,r},V_{2,r},E_r)$
    such that for all $w = 0^i110^j\in L(A)$ we have that if $q_{c_r}\in\delta(q_s,0^i1)$ and $\delta(q_{c_r},10^j)\in F$
    then $a_i\in V_{1,r}$, $b_j\in V_{2,r}$, and $(a_i,b_j)\in E_r$. Clearly, $C_r$ is a full bipartite graph (i.e. a biclique).
    As every word in $L(A)$ uses such an intermediate state and there are up to $k$ such intermediate states, we know that we
    construct up to $k$ such bicliques $C_r$ and that every edge in $E$ occurs in at least one of those bicliques $C_r$.
    By that, we have found a biclique covering of $G$ with up to $k$ bicliques.
    This concludes this direction and the claim overall.

    This result also follows for the transition variant of $\mathtt{SSB\text{-}NFA\text{-}Min}$ by using an analogous reduction, but only setting $k' = m+n+2k$. The obtained automata are the same, we are just counting the number of transitions instead of the number of states. The argument works the same.
\end{proof}

Sketch how to proof \textbf{Proposition~\ref{proposition:rb-nfa-min-np-hard}}
\begin{proof}
    \textbf{RB-NFAs:}
    Like in the bounded alphabet case of Proposition \ref{lemma:ssb-nfa-min-hard}, we set $k'$ and $c$ in exactly the same way.
    For the first direction, assume an order $<_Q$ on the states
    such that for all $i\in[n-1]$, $j\in[m]\setminus{1}$, and $r\in[k_c-1]$ we have $q_s <_Q q_{a_1}$,
    $q_{a_i} <_Q q_{a_{i+1}}$, $q_{a_n} <_Q q_{c_1}$, $q_{c_r} <_Q q_{c_{r+1}}$, $q_{c_{k_c}} <_Q q_{b_m}$,
    $q_{b_j} <_Q q_{b_{j-1}}$, and $q_{b_1} < q_f$. Then the result follows.
    The proof for the other direction works by exactly the same arguments given as in the bounded alphabet case of Proposition $\ref{lemma:ssb-nfa-min-hard}$
    and analogously applying the argument given in this proof for the first direction.

    \textbf{DB-NFAs:}
    Like in the bounded alphabet case of Proposition \ref{lemma:ssb-nfa-min-hard}, we set $k'$ exactly the same. Assume $|V_i| \geq |V_j|$ for $i,j\in{1,2}$ with $i \neq j$. Then set $c = k'$ as the maximum distance between to following states in a computation. 
    For the first direction, assume the same order $<_Q$ as in the proof of Proposition \ref{proposition:rb-nfa-min-np-hard}.
    Also set $f_w(q_i,q_{i+1}) = 1$ and $f_w(q_{|Q|},q_{1}) = 1.$
    Then notice that the largest distances between states that are possible are either the transition between $q_{a_1}$ and $q_{c_{k_c}}$ or the transition between $q_{c_1}$ and $q_{b_1}$. By the way we defined $c$, we don't surpass that value.
    For the other direction, notice that we can always set $f_w$ as above (i.e. setting all subsequent distances to 1) and set $c \leq k'$
    and then apply analogous arguments to the proofs of Proposition $\ref{proposition:rb-nfa-min-np-hard}$ and the bounded alphabet case Proposition $\ref{lemma:ssb-nfa-min-hard}$.
\end{proof}

\end{document}